\documentclass[11pt,a4paper]{article}

\usepackage[T1]{fontenc}
\usepackage[utf8]{inputenc}
\usepackage{lmodern}
\usepackage[margin=1in]{geometry}
\usepackage{setspace}

\usepackage{amsmath,amssymb,amsfonts,amsthm}
\usepackage{mathrsfs}
\usepackage[mathscr]{eucal}
\usepackage{dsfont}
\usepackage{bm}
\allowdisplaybreaks

\usepackage{graphicx}
\usepackage{booktabs}
\usepackage{array,multirow,makecell}
\usepackage{enumitem}
\usepackage{algorithm}
\usepackage{caption}
\usepackage{subcaption}
\usepackage{siunitx}
\usepackage[square,numbers]{natbib}

\usepackage{authblk}

\usepackage[colorlinks, allcolors=blue]{hyperref}
\usepackage{xurl}

\theoremstyle{plain}

\newtheorem{proposition}{Proposition}

\theoremstyle{definition}

\theoremstyle{remark}
\newtheorem{remark}{Remark}

\numberwithin{equation}{section}

\usepackage[most]{tcolorbox}
\newtcolorbox{sidebar}[1][]{breakable, enhanced, colback=black!3, colframe=black!60, boxrule=0.4pt, left=6pt, right=6pt, top=6pt, bottom=6pt, #1}

\newcommand{\id}{\mathrm{id}}

\newcommand{\R}{\mathbb{R}}
\newcommand{\N}{\mathbb{N}}

\newcommand{\E}{\mathbb{E}}

\newcommand{\argmin}{\operatornamewithlimits{arg\,min}}

\newcommand{\verti}[1]{{\left\vert\kern-0.25ex\left\vert\kern-0.25ex\left\vert #1 \right\vert\kern-0.25ex\right\vert\kern-0.25ex\right\vert}}
\newcommand{\vertj}{\vert\kern-0.25ex\vert\kern-0.25ex\vert}
\newcommand{\ip}[1]{{\left\langle\kern-0.5ex\left\langle\kern-0.5ex\left\langle #1 \right\rangle\kern-0.5ex\right\rangle\kern-0.5ex\right\rangle}}

\begin{document}

\title{Amplitude-Phase Analysis of the COVID-19 Point Process\\ and the Early Countermeasures}

\author[1]{Francesco Tripoli}
\author[2]{Leonardo V. Santoro}
\author[3]{Tomas Masak}
\affil[1]{Harvard Business School, Boston, Massachusetts, United States.}
\affil[1]{\texttt{ftripoli@hbs.edu}}
\affil[2]{Institut de Math\'ematiques, \'Ecole Polytechnique F\'ed\'erale de Lausanne, Switzerland.}
\affil[2]{\texttt{leonardo.santoro@epfl.ch}}
\affil[3]{Institute for Statistics and Mathematics, Wirtschaftsuniversit\"{a}t Wien, Austria.}
\affil[3]{\texttt{tomas.masak@wu.ac.at}}

\date{\today}

\maketitle

\begin{abstract}
\noindent We investigate how governmental restrictions relate to the spread and temporal dynamics of COVID-19 early in the pandemic. We model daily infection data from each US state as realisations of a point process, taking the random intensity measure to be the latent object of interest and, crucially, allowing these realisations to vary not only in magnitude but also in the temporal dynamics. By non-parametrically separating these amplitude and phase variations, we examine how government restrictions relate to each source of variability, relating the infection curves to the Oxford Stringency Index, which we treat as a measure on the same time window. Employing Wasserstein PCA, we analyse the temporal variability of both the infections and the restrictions. We then use the resulting scores, together with the scalars representing the overall stringency budget and the total infection count, as inputs to a linear vector-on-vector regression model. Our findings suggest that, when considering the separate contributions of amplitude and phase variability, earlier implementation of restrictions is associated with flatter infection curves. By contrast, we do not find significant evidence of an association between stringency and total infection counts, nor between the overall stringency budget and the infection outcomes considered.

\medskip
\noindent \textbf{Keywords:} Functional and distributional data analysis; optimal transport; registration; Wasserstein PCA; Oxford stringency index.

\end{abstract}

\section{Introduction}
In March 2020, about three months after initial reports on the spread of the novel SARS-CoV-2 virus in the area of Wuhan, China, governments worldwide began implementing restrictions on citizen mobility and social interaction, aimed at mitigating the surge in cases and flattening the pandemic curve.
Indeed, the sudden and swift progression of events led to initial pandemic responses being primarily informed by real-time updates on infection trends. Public events were cancelled, and mobility restrictions, including stay-at-home orders, were enforced. Extensive studies \citep[e.g.][]{Guedhami2022, Arendt2020} have delved into the profound economic and psychological effects of these measures, underscoring the need for careful consideration when implementing such regulations due to their socio-economic impacts.

While the economic implications of these restrictions are now better understood, the actual effects of government measures on shaping the pandemic's trajectory remain unclear. Notably, the OECD's 2022 report on the pandemic has highlighted the necessity for further research into the efficacy of mobility restrictions \citep{OECD}.

We attempt to address the OECD call by employing methodologies from the areas of functional data analysis \citep{ramsay2005,hsing2015} and statistical optimal transport \citep{Santambrogio2015,Panaretos2020}. Functional data analysis concerns the inference of a random process given multiple realisations thereof, while statistical optimal transport concerns the inference on probability distributions seen themselves as latent data objects.
We consider the daily infection counts in different US states early in the pandemic as a random sample of point processes, and we use the methodology of \citet{Panaretos2016} to separate the amplitude (i.e.~the \textit{intensity}) and phase variation (i.e.~the \textit{time dynamics}) in the COVID-19 daily infection counts; see Figure~\ref{fig:Figure_1} for a visual illustration, and Section~\ref{sec:preliminaries} for a brief introduction to amplitude and phase variation in functional data.

We regard the daily infection counts of each US state during the first wave as a~realisation of a point process, and the corresponding stringency index as a measure on the same time window. Each of the two objects is then described by its total mass -- the cumulative case count of a state, and the state's overall ``stringency budget'' -- together with a probability density encoding \emph{when} that mass was placed. The temporal components are compared through the lens of optimal transport: amplitude and phase variation are separated by a canonical registration procedure which leverages the point-process modelling. The temporal dynamics of the pandemic, i.e.~when the cases in a given state arrived and how concentrated in time they were, can be meaningfully compared across states and related to the timing of restrictions only after the two layers of variability have been disentangled. Our main insights arise from this separation and from treating the resulting phase component as an object of primary interest.

\begin{figure}[t!]
    \centering
    {\includegraphics[width=15.5cm]{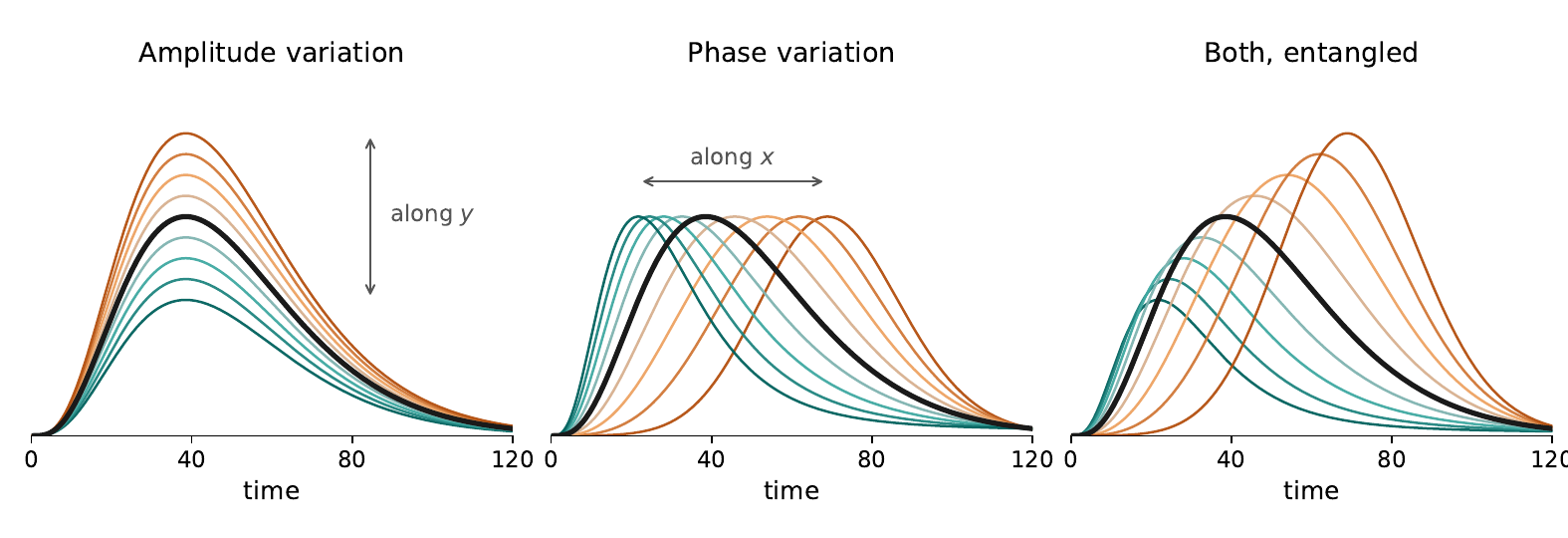} }
    \caption{Amplitude and Phase variation in functional data: samples obtained by additive perturbation of the mean (left), time-warping of the mean (center), and both effects entangled (right).}
    \label{fig:Figure_1}
\end{figure}

Such an emphasis is at odds with the prevailing view. The separation of amplitude and phase variation constitutes an active area of research that has been growing fast in recent years, especially in the field of functional data analysis \citep[see, for example,][and the references therein]{kneip1992statistical, kneip2000curve, kneip2008combining, srivastava2011registration, marron2015functional, zhang2015elastic, chakraborty2021functional}, yet the phase variation is typically considered a nuisance that needs to be taken care of in order not to distort the findings pertinent to amplitude variation, which is usually regarded as the primary object of interest. For example, in the abstract of their comprehensive review paper, \citet{marron2015functional} write: \emph{``The presence of phase variability artificially often inflates data variance, blurs underlying data structures, and distorts principal components.''} In contrast to this prevailing perspective, we argue that phase variation can itself be of primary interest. In the context of COVID-19 infection curves, temporal dynamics encode essential information about the timing and progression of the pandemic across regions -- information that registration would ordinarily discard. Recently, \citet{aston2025} advocated this viewpoint in their analysis of COVID-19 incidence in the United Kingdom.

\citet{marron2015functional} also note that the separation problem is difficult and only admits a canonical solution under stringent assumptions. From the practical point of view, they recommend tailoring registration algorithms to the problem area. Nonetheless, when the data can be viewed as (potentially time-distorted and scaled) realisations of a point process, the separation problem admits a canonical solution \citep{Panaretos2016}. Furthermore, as observed by \citet{gajardo2023point}, this provides a
``\textit{natural framework in this setting, since confirmed infections (cases) and deaths due to COVID-19 are events which arrive at random times within each region of interest, i.e.~they come from a temporal point process mechanism for each region''}, and many approaches in the literature account for the counting-process nature of COVID-19 data
\citep{dong2023non,lee2020estimating,li2021understanding,chen2022novel}.

Our perspective on the infection curves has the advantage that it comes with a canonical solution to the separation problem, put forward by \citet{Panaretos2016}.
This also allows us to model the phase variability in the COVID case count process using Wasserstein tangent-space principal component analysis (WTPCA) \citep[cf.][]{petersen2016functional,Panaretos2020,bigot2026pca}. In the spirit of functional regression regularised via functional principal component analysis (FPCA), we subsequently utilise the principal component (PC) scores in a vector-on-vector regression model to finally relate the infection curves to restrictions: the main covariate in our analysis is the Oxford stringency index \citep{Hale2021}, which combines a variety of individual governmental restrictions in a time-varying scalar value.
We view this index through the same distributional lens, and perform WTPCA after scaling.
The final regression model can thus be seen as a Wasserstein tangent-space version of similar distribution-on-distribution regression models
\citep{ghodrati2022,chen2023wasserstein,chen2024distribution}.
While we use the COVID application (specifically the Oxford stringency index as an input and the COVID case counts as the output, together with additional control covariates) as a relatable case study, we believe that utilising WTPCA in standard regression analysis is an interesting option for many applications where distributional variables are commonly summarised using several moments and/or quantiles. Examples include age(-at-death) distribution \citep{bergeron2019age,camarda2024bayesian}, (vaccination timing) exposure distribution \citep{dey2016vaccine,fox2024postexposure}, or wealth distribution \citep{davies2000distribution,blanchet2022uncovering}, just to name a few.

The main contribution of this paper is thus two-fold. Utilising recently developed techniques from functional data analysis and optimal transport, we put forward a fresh perspective on the role of restrictions in the COVID-19 pandemic, one that is particularly light on modelling assumptions. Moreover, we propose a Wasserstein tangent space distribution-on-distribution regression model with external covariates that is highly interpretable, and we highlight this using the well-studied and known COVID-19 data set (specifically the first wave of the pandemic in the 50 US states) as a case study. In our opinion, this is a rare instance of phase variability in functional data being of an equal, or possibly even primary importance, compared to variation in amplitude. From the applied perspective, our analysis finds early restrictions to be associated with flatter infection curves, thus contributing to the OECD's call for further investigations on the efficacy of such restrictions.

The remainder of this paper is structured as follows. Section~\ref{sec:data} introduces the data used in this case study, the overall statistical framework, and the necessary pre-processing steps. Section~\ref{sec:methods} provides an overview of the methodology employed in this study. The results are then presented in Section~\ref{sec:results}, followed by a discussion of our findings and potential deficiencies. All data analyses can be reproduced straightforwardly using the accompanying repository. Further methodological details, diagnostics, and proofs are provided in the appendices.

\section{Data and Setup} \label{sec:data}

To provide a clear context for our study, we begin by describing the data sources and the pre-processing steps employed in this work.

The data on daily counts of COVID-19 infections in the United States are sourced at a sub-national level from the New York Times COVID-19 repository~\citep{NYTimes2021}. We intentionally focused on the US in order to ensure a more consistent and homogeneous approach to data collection and reporting for each state. Daily counts were obtained by first differencing the cumulative series and rescaled using the 2019 resident population from the U.S. Census Bureau estimates \citep{USCensus2020}. Furthermore, in order to track governments' responses, we gathered data from the Oxford Stringency Index \citep{Hale2021}, a comprehensive metric averaging nine policy indicators that encompass various governmental measures to slow down the spread of the pandemic, including -- for example -- school and workplace shutdowns or travel restrictions. The index ranges from 0 to 100, with the latter signifying the most stringent measures. Figure~\ref{fig:raw_data} displays the cumulative infection counts and the aggregated Oxford stringency index for individual states. Recall that our goal is to investigate the association between the shape of the infection curves (shown in Figure~\ref{fig:raw_data}, left) and the shape of the stringency curves (shown in Figure~\ref{fig:raw_data}, right), treating the former as the response and the latter as the main predictor.

\begin{figure}[t!]
  \centering
   \begin{tabular}{cc}
           \includegraphics[width=0.48\textwidth]{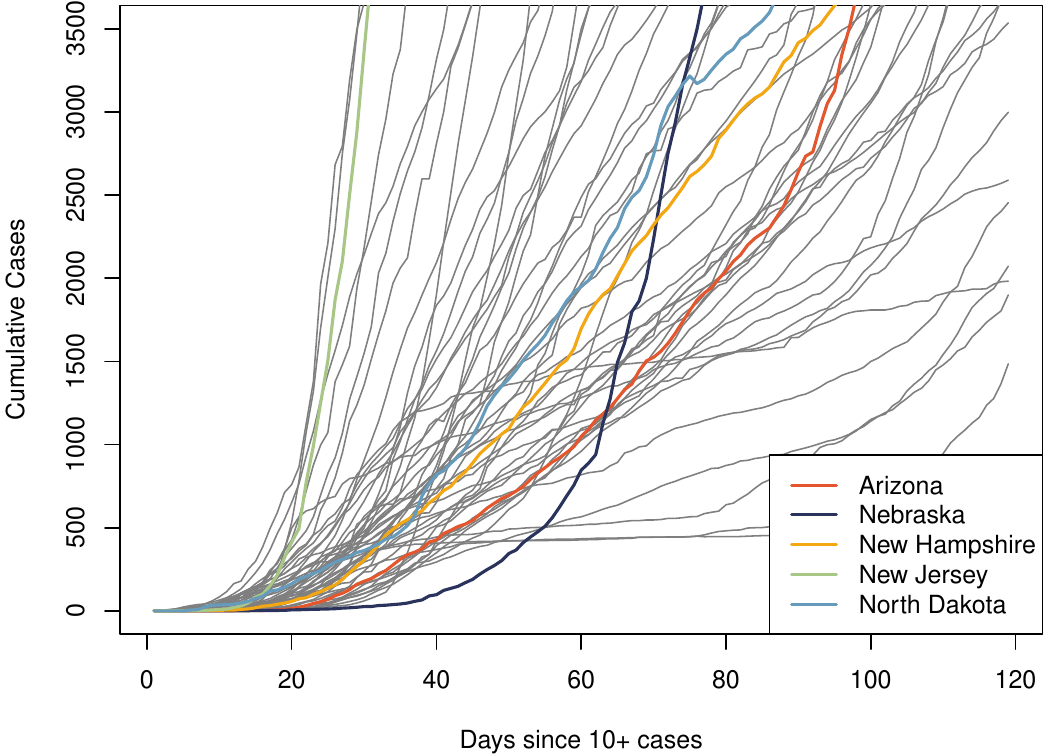}  &
           \includegraphics[width=0.48\textwidth]{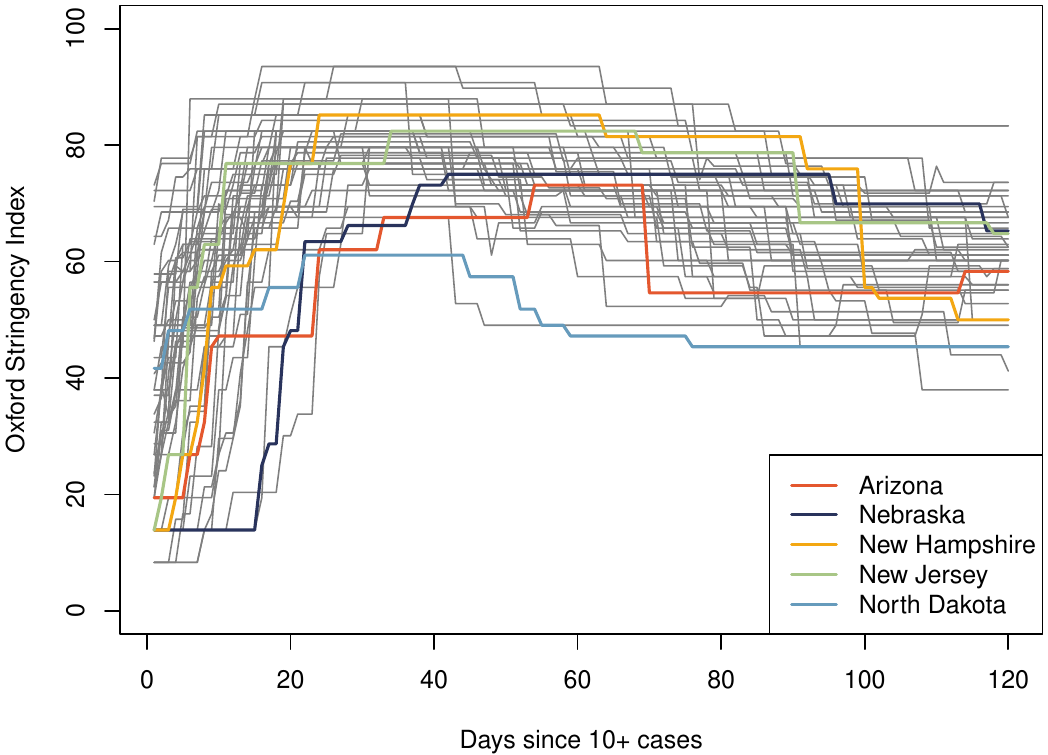}
        \end{tabular}
  \caption{Cumulative infection counts per million inhabitants (left) and Oxford Stringency Index (right) for the fifty US states since 21 days before the time when the cumulative infection counts reached 10 per million inhabitants in a given state. Five states are highlighted in colour for easy comparison between the figures.}
  \label{fig:raw_data}
\end{figure}

For the purposes of this analysis, we treat these data as functional since they represent the evolution of a process that changes continuously over time. The data undergo an initial smoothing procedure, which is both crucial for the methodology (cf.~Remark~\ref{rem:imperfect_obs} below) and well-suited for this particular application.
Smoothing the case counts is desirable from the applied perspective here, since especially during the initial phases of the pandemic, cases might be recorded with a slight delay or, conversely, there could be an initial over-reporting of cases on day $t$, later corrected by marking fewer cases on day $t+1$. Similarly, smoothing the stringency index is plausible considering that the propagation of a specific mitigation measure through the population is not immediate. For example, some may choose to remain in their houses before a formal stay-at-home order is issued, when such an order is expected, while others might be reached by the order only some time after its enforcement.

The pre-processing is carried out on state-specific first-wave windows. We assumed that the first wave of the pandemic endured for 120 days from its beginning, defined by 21 days before the time when the cumulative infection count reached 10 per million inhabitants in a~given state. A similar strategy was used by \citet{carol2020}.
Each curve of daily infection counts is then smoothed with a Gaussian Nadaraya--Watson kernel smoother with bandwidth $h=14$ days. For the stringency index, the daily index values are treated as a discrete measure on the same time grid and smoothed with the local polynomial density estimator of \citet{Cattaneo2020}, implemented through \texttt{lpdensity}, using a local quadratic fit, the default triangular kernel, and a bandwidth of $h=14$ days. To alleviate boundary issues, three additional days are included on each side of the target window before evaluating the smoothed curve on the 120-day grid. While conventional density estimators typically involve smoothing a histogram-based density estimator, \citet{Cattaneo2020} propose to smooth the empirical distribution function using local polynomial techniques. Consequently, their density estimator relies on a preliminary $\sqrt{N}$-consistent distribution function estimator. This approach requires only the choice of the bandwidth parameter associated with the local polynomial fit at each evaluation point. This method is preferred to the traditional kernel density estimator near the boundaries of the analysis window, where the latter may require additional corrections \citep{Wand1995, Cattaneo2020}. The data are subsequently evaluated across a discrete, evenly spaced grid between the initial time, $t_i$ (here the onset of the first wave of the pandemic), and the final time $t_f$ (i.e.~the moment marking its conclusion). Figure~\ref{fig:time_windows} shows the time periods considered for each individual state.

An initial data visualisation reveals that the case count curves exhibit two, or often three, periods of accelerated growth. The beginning of these phases varies slightly across states and signifies the occurrence of the three ``waves'' of the pandemic between March 2020 and June 2021. We focused on the first wave of COVID-19, due to the variability in the number of waves experienced by different states and because we conceptualised a single pandemic wave as a random point process, as detailed above.

\begin{figure}[t!]
  \centering
  \includegraphics[width=0.8\linewidth]{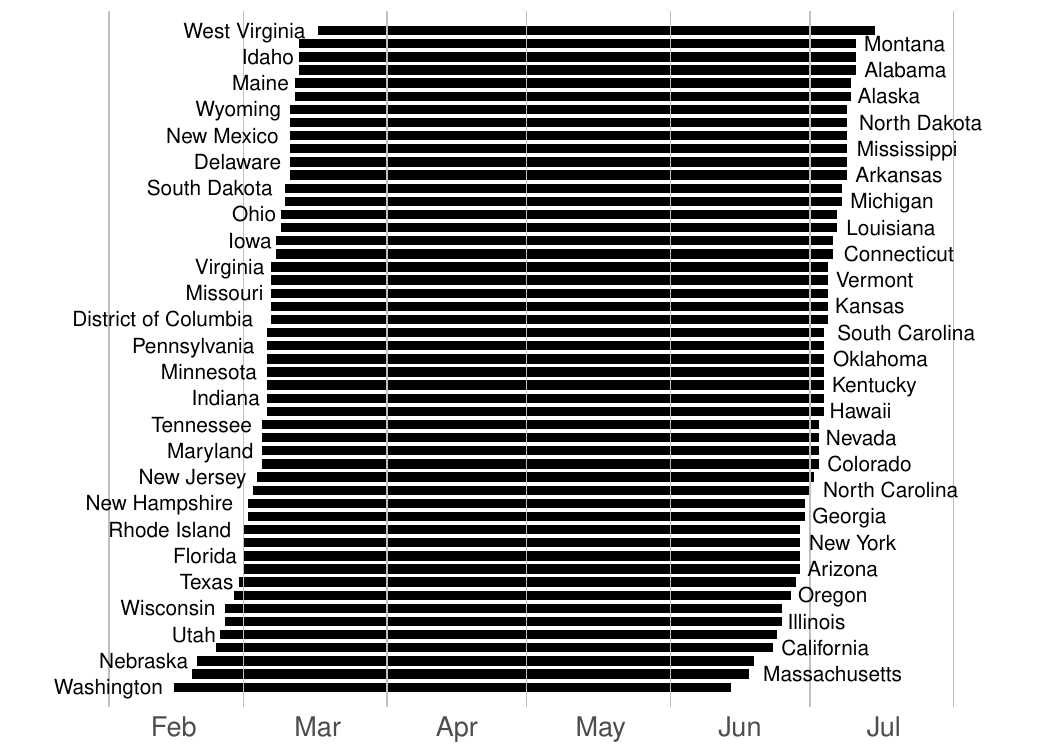}
  \caption{Time periods pertinent to the first wave of COVID-19 for each US state.}
  \label{fig:time_windows}
\end{figure}

As the last step of the pre-processing pipeline, we expressed each state's distribution data (i.e.~the scaled daily infection counts and stringency index curves) as densities. This involves removing constants, which are then used as regression inputs. Specifically, for each state we write
\[
c_n=\sum_t \widehat Y_n(t),
\qquad
f_n(t)=\frac{\widehat Y_n(t)}{c_n},
\qquad
b_n=\sum_t \widehat X_n(t),
\qquad
g_n(t)=\frac{\widehat X_n(t)}{b_n},
\]
where $\widehat Y_n$ and $\widehat X_n$ denote the smoothed infection and stringency curves, respectively. This step is desirable since, as detailed in Sections~\ref{sec:registration} and~\ref{sec:WTPCA}, both the registration procedure and the Wasserstein PCA operate at the density level. We then attach specific interpretations to each of the constants: the ones associated with infection counts, $c_n$, reflect the overall impact of COVID-19 on each state -- the higher the constant, the greater the number of infections recorded. Conversely, the constants related to the stringency index, $b_n$, represent the ``stringency budget'' for each state. Essentially, this can be seen as each state being able to afford enforcing a certain overall level of stringency, beyond which the economic and social consequences would be deemed exorbitant. Each governor then decides how to allocate this budget across different days of the pandemic and in what proportions. Considering the stringency curves in Figure~\ref{fig:raw_data} (right) as instances of scaled densities (to be smoothed) is a particular viewpoint adopted in the data analysis.

Finally, the control variables used in regression analysis are the state-level GDP per capita, defined as the 2019 annual current-dollar GDP from the Bureau of Economic Analysis \citep{BEA2020GDP} divided by the 2019 resident population \citep{USCensus2020}; and the population density for each state, computed from the 2019 resident population and state land area. Both of these variables were log-transformed. We also considered the percentage of democratic voters, measured as the Democratic vote share in the latest statewide election preceding the pandemic as sourced from The Council of State Governments \citep{CSG2020}. However, this variable was ultimately dropped as unimportant.

\section{Methods}\label{sec:methods}

Recall from Section~\ref{sec:data} that we regard the infection curves of the individual states as i.i.d.\ realisations of a point process, distorted both in amplitude and in phase, and that both these curves and the stringency index have been reduced to a total mass and a probability density. This section develops the machinery needed to relate the two, all of it applied to the smoothed data of Section~\ref{sec:data}. We first collect the necessary background, on amplitude and phase variation and on the geometry of the space of measures. We then describe, in turn, the registration procedure that separates the two sources of variability, the Wasserstein tangent-space PCA by which the temporal dynamics are summarised in a few scores, and the vector-on-vector regression through which those scores are finally related.

\subsection{Preliminaries}\label{sec:preliminaries}
\paragraph{Amplitude and phase variation.}
One refers to amplitude and phase variation to identify two different types of variation of a random function $f\::\:[0,T]\to \mathbb{R}$ over a continuous compact domain $[0, T]$, cf.~Figure~\ref{fig:Figure_1}. The first type of variation consists of random fluctuations around a~mean level, for instance arising as:
$$f(t) = \mu(t) + \varepsilon(t),$$
where $\mu(t)= \mathbb{E}[ f(t) ]$ denotes the mean of the random function $f$ and $\varepsilon$ is a zero-mean perturbation, typically assumed to be smooth in some sense. Thus, the amplitude variation refers to a fluctuation ``along the y-axis''.
On the other hand, phase variation is non-linear and arises from random deformations in the time scale, speeding up or slowing down the process evolution, and yielding time-warped curves:
$$
\tilde{f}(t) = \mu(\varphi^{-1}(t)) + \varepsilon(\varphi^{-1}(t)) ,
$$
where $\varphi: [0,T] \rightarrow [0,T]$ is typically referred to as the \textit{warping function}. This is a random increasing function, independent of $\varepsilon$, preserving the time scale on average, i.e.~$\mathbb{E}[\varphi(t)] = t$. Thus, phase variation induces a fluctuation ``along the x-axis''.

Phase variation arises naturally in random processes that lack an absolute notion of time, and each realisation evolves according to a speed that is intrinsic to the process itself. This is for instance the case of growth curves, as in \citet{ramsay2005,ramsay2002applied}, but many other examples can be found in the literature, such as heartbeat signals, speech waveforms, or activity profiles, just to name a few.
The entanglement of these two layers of variation often occurs intrinsically to many processes. It is fundamental to correctly separate and account for the contribution of each layer in order to deduce valuable information on the process, and failing to do so may seriously distort the outcomes of an analysis \citep{marron2015functional}.

\vspace{-3mm}
\paragraph{The Wasserstein space.}  Let $\mathcal{X}\subset\mathbb{R}$ be compact and consider the space of second-order probability measures on $\mathcal{X}$,
$$
\mathcal{W}_2(\mathcal{X})=\left\{\nu \in \mathcal{P}(\mathcal{X}): \int_{\mathcal{X}}|x|^2 \mathrm{d}\nu(x)<\infty\right\},
$$
where $\mathcal{P}(\mathcal{X})$ denotes the probability measures on $\mathcal{X}$. The 2-Wasserstein distance between $\mu,\nu \in \mathcal{W}_2(\mathcal{X})$ is given by the Monge problem of optimal transportation,
$$
W^2_2(\mu,\nu):= \inf \left\{\int_{\mathcal{X}}\left\{T(u)-u\right\}^2 \mathrm{d}\mu(u) \ \text{ such that } \ T_{\#}\mu=\nu\right\},
$$
where $T_{\#}\mu$ denotes the push-forward of $\mu$ through a $\mu$-measurable map $T$. For absolutely continuous measures this problem has a unique solution, induced by the \textit{optimal} transport map, which is available in closed form as
$
T_{\mu}^{\nu} := F_\nu^{-1} \circ F_\mu,
$
where $F_\mu$ denotes the distribution function of $\mu$ and $F_\mu^{-1}$ its right-continuous quantile function. Substituting this map into the Monge problem yields the equally explicit expression
\begin{equation}\label{eq:w2-quantile}
W_2^2(\mu,\nu) = \int_0^1 \left\{F^{-1}_{\mu}(u) - F^{-1}_{\nu}(u)\right\}^2 \mathrm{d}u,
\end{equation}
so that the quantile representation embeds $\mathcal{W}_2(\mathcal{X})$ isometrically into $\mathcal{L}^2(0,1)$, its image being the closed convex set of non-decreasing functions therein. It is in this sense that the univariate Wasserstein space is \emph{flat}, in contrast with its higher-dimensional counterparts, and this flatness underlies and enables most of what follows.

The Wasserstein space $\left(\mathcal{W}_2(\mathcal{X}), W_2\right)$ may be endowed with a formal Riemannian structure: at each absolutely continuous measure $\mu$ there exists a flat space, together with a projection onto it, that faithfully describes small perturbations around $\mu$. Concretely, writing $\mathcal{L}^2(\mu)$ for the $\mu$-square-integrable real functions on $\mathcal{X}$ equipped with the inner product
$$
\left\langle v_1, v_2\right\rangle_\mu=\int_{\mathcal{X}} v_1(u) v_2(u)\, \mathrm{d}\mu(u), \qquad v_1, v_2 \in \mathcal{L}^2(\mu),
$$
the set $\left\{\gamma ( T_{\mu}^{\nu} - \mathrm{id}): \gamma>0, \ \nu \in \mathcal{W}_2(\mathcal{X})\right\}$ is a pre-Hilbert space, the closure of which is the tangent space $\mathrm{Tan}(\mu) \subset \mathcal{L}^2(\mu)$ \citep{ambrosio2008}. Note that each $\mathrm{Tan}(\mu)$ consists, as a set, of square-integrable functions, so that different measures yield tangent spaces differing in their inner product rather than in their elements.

The projection onto the tangent space at $\mu$ is effected by the logarithm map $\nu \mapsto \log_{\mu}(\nu) \in \mathrm{Tan}(\mu)$, which records the direction and the magnitude of the transport required to move from $\mu$ to $\nu$, and which for absolutely continuous measures is given by
\begin{equation}\label{eq:log-map}
    \log_\mu(\nu)= F_\nu^{-1} \circ F_\mu-\mathrm{id} = T_{\mu}^{\nu} -\mathrm{id}.
\end{equation}
Perturbations are mapped back onto the space of measures by the exponential map $\exp_{\mu}(v) = (\mathrm{id}+v)_{\#}\mu$, which is a left inverse of the logarithm map, since $\exp_\mu(\log_\mu(\nu)) = (T_\mu^\nu)_\#\mu = \nu$.

 Let $\nu_1,\dots,\nu_n$ be absolutely continuous elements of $\mathcal{W}_2(\mathcal{X})$. Their 2-Wasserstein \textit{Fr\'echet} mean, or \textit{barycenter}, is a minimiser of the sum of squared distances,
$$
\hat\mu = \argmin_{\mu \in \mathcal{W}_2(\mathcal{X})}\sum_{i=1}^n W_2^2(\mu, \nu_i),
$$
and generalises the Euclidean mean to a setting in which simple averages are unavailable. Questions of existence and uniqueness of Fr\'echet means are in general delicate, but in the univariate Wasserstein space the barycenter exists, is unique, and is characterised through its quantile function \citep{Panaretos2016}:
\begin{equation}\label{eq:fpe}
F^{-1}_{\hat\mu} (\cdot) = \frac{1}{n}\sum_{i=1}^n F^{-1}_{\nu_i}(\cdot),
\end{equation}
as is immediate from \eqref{eq:w2-quantile}.

\subsection{Modelling and Registration} \label{sec:registration}
The temporal dynamics of the pandemic are captured by viewing the daily counts of a given state as a point process on a compact time window $[0,T]$, randomly distorted in time. Concretely, we posit a latent point process $\Pi$ with deterministic mean measure $\lambda$, together with i.i.d.\ random homeomorphisms $\varphi_1,\dots,\varphi_N$ of $[0,T]$, one per state, so that what we observe are the warped realisations $\widetilde\Pi_n = \varphi_{n\#}\Pi_n$ for $n=1,\dots,N$. The warps $\varphi_n$ then carry the temporal information we are after, and disentangling them from the realisations $\Pi_n$ is the content of the registration procedure of \citet{Panaretos2016}, which we now describe, following closely the authors' exposition.

Formally, we take $\Pi$ to be a point process on $[0,T]$, regarded as a random discrete measure with bounded second moment, $\mathbb{E}[(\int_{0}^{T}\mathrm{d}\Pi)^2] < \infty$, whose mean measure $\lambda(A) = \mathbb{E}[\Pi(A)]$ is defined on the Borel sets $\mathcal{B}$ of $[0,T]$. Recall that the push-forward of $\Pi_n$ through $\varphi_n$ acts as $\varphi_{n\#}\Pi_n(A) = \Pi_n(\varphi^{-1}_n(A))$, so that warping relocates mass in time without creating or destroying it. Introducing the random measures $\Lambda_n := \varphi_{n\#}\lambda$, that is $\Lambda_n(A) = \lambda(\varphi^{-1}_n(A))$, the mean measure of an observed process conditionally on its own time change is $\mathbb{E}[\widetilde{\Pi}_n \mid \varphi_n] = \Lambda_n$. Separating amplitude from phase variation thus amounts to constructing estimators $\{\hat{\varphi}_n\}$ and $\{\hat{\Pi}_n\}$ of the time changes and of the underlying realisations: the former capture the phase variation, the latter the amplitude variation.

Consistent separation requires two assumptions on the random time changes:
\begin{enumerate}
    \item \emph{Unbiasedness}: $\mathbb{E}[\varphi(x)]=x$ for every $x \in [0,T]$;
    \item \emph{Regularity}: $\varphi$ is monotonically increasing almost surely.
\end{enumerate}
The first fixes the average time change to be the identity, so that the ``objective'' time scale is preserved on average. Since our analyses are comparative and this objective scale is of no interest in itself, unbiasedness is of no practical consequence here. The second demands that $\varphi$ be a genuine warping of time, its failure amounting to a time reversal -- which is incompatible with most applications, ours included.

Under these assumptions the model admits a canonical representation, which in turn yields a straightforward estimation procedure. The key observation is that the structural mean measure $\lambda$ is the Fr\'echet mean of the random measure $\Lambda = \varphi_{\#}\lambda$ with respect to the Wasserstein metric $\argmin_{\mu\in \mathcal{W}_2(0,T)}\mathbb{E}\,W_2^2(\mu,\Lambda)$,

a natural estimator of which is given by the empirical Fr\'echet mean of $\{\Lambda_1,\dots,\Lambda_N\}$, namely $\argmin_{\mu}\sum_{n=1}^{N}W_2^2(\mu,\Lambda_n)$, which in one dimension is available in closed form by averaging quantile functions, cf.~\eqref{eq:fpe}.
The $\Lambda_n$ are of course not observable; what we observe are the point processes $\widetilde{\Pi}_n$. The procedure of \citet{Panaretos2016} therefore proceeds by substituting estimates of the $\Lambda_n$ obtained from the observed processes, in four steps.
\begin{enumerate}[label=(\arabic*),noitemsep]
    \item \emph{Estimate the random measures.} For every $n$, estimate $\Lambda_n$ by smoothing the observed process $\widetilde{\Pi}_n$, as described in Section~\ref{sec:data}, yielding $\hat{\Lambda}_n$.
    \item \emph{Estimate the structural mean.} Take $\hat{\lambda}$ to be the empirical Fr\'echet mean of $\hat{\Lambda}_1,\dots,\hat{\Lambda}_N$, available in closed form through \eqref{eq:fpe} as $F_{\hat\lambda}^{-1} = N^{-1}\sum_{n=1}^{N} F_{\hat\Lambda_n}^{-1}$.
    \item \emph{Estimate the time changes.} Take $\hat{\varphi}_n$ to be the optimal transport map from $\hat{\lambda}$ onto $\hat{\Lambda}_n$, that is $\hat{\varphi}_n = T_{\hat\lambda}^{\hat\Lambda_n} = F_{\hat\Lambda_n}^{-1}\circ F_{\hat\lambda}$, which is legitimate since an increasing map pushing $\lambda$ onto $\Lambda_n$ is necessarily the optimal one.
    \item \emph{Register.} Undo the estimated time changes, $\hat{\Pi}_n = (\hat{\varphi}^{-1}_{n})_{\#} \widetilde{\Pi}_n$ with $\hat{\varphi}^{-1}_n = F_{\hat\lambda}^{-1}\circ F_{\hat\Lambda_n}$, for $n = 1,\dots,N$.
\end{enumerate}
The maps $\hat\varphi_n$ and the registered processes $\hat\Pi_n$ are the estimated phase and amplitude components respectively. We refer the reader to \citet{Panaretos2016} for further details and for consistency results.

\subsection{Wasserstein Tangent-space PCA} \label{sec:WTPCA}

When data arises in the form of probability densities, their statistical analysis is complicated by the lack of a \textit{linear} structure on the space of measures, which renders the direct use of basic tools such as PCA inaccessible. Fortunately, the space of univariate measures, endowed with the geometric structure stemming from optimal transport theory, admits a \textit{linearisation} by means of a \textit{tangent space projection}.

By \eqref{eq:fpe}, projecting the measures $\nu_1,\dots,\nu_n$ onto the tangent space at their barycenter yields \emph{centred} objects, $\frac{1}{n}\sum_{i=1}^n\log_{\hat\mu}(\nu_i) = 0 \in \mathrm{Tan}(\hat\mu)$: the measure-valued data have been turned into zero-mean elements of a Hilbert space of square-integrable functions, in which an ordinary principal component analysis can be carried out.

In our setting, WTPCA reduces to the FPCA of the quantile functions. Indeed, for any absolutely continuous $\mu$ there is an isometry between $\mathcal{L}^2(\mu)$ and $\mathcal{L}^2(0,1)$ amounting to a change of measure \citep{bigot2017geodesic}: letting
\[
G: \mathcal{L}^2(\mu)\to \mathcal{L}^2(0,1) \qquad \text{be given by} \qquad (Gv)(u) = v(F^{-1}_\mu(u)),
\]
we have, for $v_1,v_2 \in \mathcal{L}^2(\mu)$,
\[
\langle v_1 , v_2 \rangle_{\mathcal{L}^2(\mu)} = \int_{\mathcal{X}} v_1(t)v_2(t)\,\mathrm{d}\mu(t) = \int_0^1 v_1(F^{-1}_\mu(u)) v_2(F^{-1}_\mu(u))\,\mathrm{d}u = \langle G v_1, G v_2 \rangle_{\mathcal{L}^2(0,1)},
\]
where the middle equality uses the fact that if $U$ is uniformly distributed on $(0,1)$ then $F^{-1}_\mu(U)$ has law $\mu$, applied to $h = v_1 v_2$. Applying $G$ to the log-maps \eqref{eq:log-map} moreover yields
\[
G \log_\mu(\nu) = F^{-1}_\nu \circ F_\mu \circ F^{-1}_\mu - \mathrm{id} \circ F^{-1}_\mu = F^{-1}_\nu - F^{-1}_\mu,
\]
so that the quantile-level data cloud $\{ F^{-1}_{\nu_i} - F^{-1}_{\hat\mu}\}_{i=1}^n \subset \mathcal{L}^2(0,1)$, with empirical covariance $\hat{C}$, is the image under $G$ of the tangent-level cloud $\{\log_{\hat\mu}(\nu_i) \}_{i=1}^n \subset \mathrm{Tan}(\hat\mu)$, with empirical covariance $\hat{S}$. The two covariance operators are therefore unitarily equivalent, $\hat{C} = G \hat{S} G^{-1}$, so they share their eigenvalues, and their eigenfunctions are related by $\psi_k = G \phi_k$, where $\{\psi_k\}$ and $\{\phi_k\}$ denote the eigenfunctions of $\hat{C}$ and of $\hat{S}$ respectively. At the quantile level, we thus obtain the usual Karhunen--Lo\`eve representation
\begin{equation}\label{eq:kl-quantile}
F^{-1}_{\nu_i} - F^{-1}_{\hat\mu} = \sum_k \xi_{ik}\, \psi_k, \qquad \xi_{ik} = \langle F^{-1}_{\nu_i} - F^{-1}_{\hat\mu},\, \psi_k\rangle_{\mathcal{L}^2(0,1)}.
\end{equation}

Suppose two measures differ in the $k$-th score alone, so that $F^{-1}_{\nu_j} = F^{-1}_{\nu_i} + \Delta\,\psi_k$ with $\Delta := \xi_{jk} - \xi_{ik}$. Then, for every level $u$, the quantile of $\nu_j$ at $u$ is that of $\nu_i$ shifted by $\Delta\,\psi_k(u)$: mass sitting at rank $u$ under $\nu_i$ is relocated by an amount $\Delta\,\psi_k(u)$ under $\nu_j$. The principal components are, in this sense, modes of deformation, or of transport, describing how a measure departs from the barycenter. The description is exact in the tangent space, and is carried back to the space of measures by the exponential map at $\hat\mu$.

Finally, note that the cloud $\{ F^{-1}_{\nu_i} - F^{-1}_{\hat\mu}\}_{i=1}^n$ is centred at zero by virtue of \eqref{eq:fpe}, whereas $\{F^{-1}_{\nu_i}\}_{i=1}^n$ is centred at the unknown $F^{-1}_{\hat\mu}$, which \eqref{eq:fpe} estimates. The ordinary FPCA of the quantile functions is therefore equivalent to WTPCA, which in practice reduces to three steps:
\begin{enumerate}[label=(\roman*),noitemsep]
    \item evaluate each quantile function $F_{\nu_i}^{-1}$ on a common equispaced grid $u_1 < \cdots < u_m \in (0,1)$, stacking the results as the rows of a matrix $\mathbf{Q} \in \mathbb{R}^{n \times m}$;
    \item centre $\mathbf{Q}$ column-wise, the column means estimating the quantile function of the Fr\'echet mean as in \eqref{eq:fpe};
    \item\label{item:svd} compute the truncated singular value decomposition of the centred matrix, $\mathbf{Q}_c = \mathbf{U}\mathbf{D}\mathbf{V}^\top$.

\end{enumerate}
The columns of $\mathbf{V}$, rescaled by $\sqrt{m}$, are the estimated eigenfunctions $\hat\psi_k$ evaluated on the grid; the entries of $\mathbf{U}\mathbf{D}/\sqrt{m}$ are the corresponding scores $\hat\xi_{ik}$; and $\hat\sigma^2_k = d_k^2/(nm)$, with $d_k$ the $k$-th singular value, estimates the $k$-th eigenvalue. The factors of $\sqrt{m}$ merely convert the Euclidean inner product on the grid into its $\mathcal{L}^2(0,1)$ counterpart, and cancel in the reconstruction \eqref{eq:kl-quantile}.

\paragraph{Consistency.} It remains to record that the procedure is statistically consistent, both under direct observation of the measures and in the practically relevant scenario of point process observations, which is the one arising in our application.

Assume a generative model in which each observed distribution is a push-forward of a fixed absolutely continuous reference measure $\mu$ on a compact interval $\mathcal{X}\subset\mathbb{R}$ with $F_\mu$ strictly increasing on the interior of $\mathcal{X}$,
\begin{equation*}
  \nu_i=(T_i)_\#\mu,\qquad i=1,\dots,n,
\end{equation*}
the $T_i$ being i.i.d.\ copies of a random \emph{increasing} homeomorphism $T:\mathcal{X}\to\mathcal{X}$ with $\mathbb{E}[T]=\id$ and $\mathbb{E}\|T-\id\|^2_{L^2(\mu)}<\infty$.

We assume the variability of $T-\id$ in $\mathrm{Tan}(\mu)$ to be of finite rank $R$: writing $\{(\sigma_j^2,\phi_j)\}_{j\ge1}$ for the spectral decomposition of the covariance operator $\mathbb{E}[(T-\id)\otimes(T-\id)]$, we take $\sigma_j^2=0$ for all $j>R$, with $\sigma_1^2>\cdots>\sigma_R^2>0$.

\begin{proposition}
\label{prop:consistency}
Let $\nu_1,\dots,\nu_n$ be i.i.d.\ probability measures on a compact $\mathcal{X}\subset\mathbb{R}$ obeying the generative model above, let $\hat\mu$ denote the empirical barycenter, and set $\hat T_i:=T^{\nu_i}_{\hat\mu}$. Then the following hold.
\begin{enumerate}
  \item $W_2^2(\hat\mu,\mu)=O_p(n^{-1})$.
  \item Let $P:L^2(\mathcal{X},\hat\mu)\to L^2(\mathcal{X},\mu)$ denote the parallel transport $Pv=v\circ F_{\hat\mu}^{-1}\circ F_\mu$, and let $\{(\hat\sigma_j^2,\hat\phi_j)\}$ be the spectral decomposition of $\frac1n\sum_{i=1}^n(\hat T_i-\id)\otimes(\hat T_i-\id)$, the signs of the $\hat\phi_j$ being chosen so that $\langle P\hat\phi_j,\phi_j\rangle\ge0$. Then
  \begin{equation*}
    \sup_{1\le j\le R}\bigl|\hat\sigma_j^2-\sigma_j^2\bigr|^2=O_p(n^{-1}),\qquad \sup_{1\le j\le R}\bigl\|P\hat\phi_j-\phi_j\bigr\|^2_{L^2(\mathcal{X},\mu)}=O_p(n^{-1}).
  \end{equation*}
  \item For every $i$, the rank-$R$ reconstruction is consistent:
  \begin{equation*}
    W_2\Bigl(\exp_\mu\Bigl(\sum_{j=1}^{R}\langle\hat T_i-\id,\hat\phi_j\rangle_{L^2(\mathcal{X},\hat\mu)}\,P\hat\phi_j\Bigr),\ \nu_i\Bigr)=o_p(1).
  \end{equation*}
\end{enumerate}
\end{proposition}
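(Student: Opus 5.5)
The plan is to transfer everything to the quantile level via the isometry $G$ of Section~\ref{sec:WTPCA}, where the problem becomes a standard functional PCA of i.i.d.\ elements of $\mathcal{L}^2(0,1)$. Write $Q_i := F^{-1}_{\nu_i} = T_i\circ F^{-1}_\mu$, which holds because $T_i$ is increasing and pushes $\mu$ to $\nu_i$. Set $Q := F^{-1}_\mu$ and $Z_i := Q_i - Q = G(T_i-\id)$. These are i.i.d., centred (since $\mathbb{E}T=\id$), and have finite second moment in $\mathcal{L}^2(0,1)$, because $\|Z_i\|^2 = \|T_i-\id\|^2_{L^2(\mu)}$. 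Their covariance is $G\,\mathbb{E}[(T-\id)\otimes(T-\id)]\,G^{-1}$, with eigenpairs $(\sigma_j^2, G\phi_j)$.

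\emph{Item 1.} By \eqref{eq:fpe} and \eqref{eq:w2-quantile},
\[
W_2^2(\hat\mu,\mu)=\Bigl\|\tfrac1n\sum_i Z_i\Bigr\|^2_{\mathcal{L}^2(0,1)},
\]
whose expectation is $n^{-1}\mathbb{E}\|Z\|^2$. Markov's inequality then gives $O_p(n^{-1})$.

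\emph{Item 2.} Apply the isometry $\hat G$ associated with $\hat\mu$. This gives $\hat G(\hat T_i-\id)=Q_i-\hat Q$ with $\hat Q=n^{-1}\sum_i Q_i$. So the empirical operator is unitarily equivalent, via $\hat G$, to the empirical covariance $\hat C=\frac1n\sum_i (Z_i-\bar Z)\otimes(Z_i-\bar Z)$. Consequently $\hat\sigma_j^2$ are the eigenvalues of $\hat C$, and $\hat G\hat\phi_j$ are its eigenfunctions.

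The parallel transport satisfies $G P = \hat G$, since $(Pv)\circ F_\mu^{-1}=v\circ F^{-1}_{\hat\mu}$. Hence
\[
\|P\hat\phi_j-\phi_j\|_{L^2(\mu)}=\|\hat G\hat\phi_j-G\phi_j\|_{\mathcal{L}^2(0,1)}.
\]
The problem therefore reduces to eigen-perturbation for $\hat C$ against $C$. The Hilbert--Schmidt deviation satisfies $\|\hat C-C\|_{HS}=O_p(n^{-1/2})$: the term $\frac1n\sum Z_i\otimes Z_i - C$ is handled by Chebyshev in the Hilbert space $HS$, and the correction $\bar Z\otimes\bar Z$ is $O_p(n^{-1})$.

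One gap must be checked here. A fourth moment is needed, and it follows from compactness of $\mathcal{X}$, since $\|Z\|\le \mathrm{diam}\,\mathcal{X}$. This boundedness is where compactness enters.

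Weyl's inequality then gives $\sup_j|\hat\sigma_j^2-\sigma_j^2|\le\|\hat C-C\|_{op}$. For the eigenfunctions, the standard Davis--Kahan/Bosq bound
\[
\|\hat\psi_j-\psi_j\|\le 2\sqrt2\,\|\hat C-C\|/\delta_j
\]
applies, with the sign convention matching the stated one. Here $\delta_j>0$ is the spectral gap, which is positive for $j\le R$ because $\sigma_1^2>\dots>\sigma_R^2>0=\sigma_{R+1}^2$. Squaring both bounds gives the $O_p(n^{-1})$ rates.

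\emph{Item 3.} The reconstruction is $\exp_\mu(v)$ with $v=\sum_{j\le R}\hat\xi_{ij}P\hat\phi_j$, and its quantile function is $(\id+v)\circ Q=Q+Gv$. Note that $\hat\xi_{ij}=\langle Q_i-\hat Q,\hat\psi_j\rangle$ and $Gv=\sum_j\hat\xi_{ij}\hat\psi_j$.

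One caveat is that $\id+v$ need not be monotone. I would therefore bound $W_2$ by the $L^2(\mu)$ cost of the (possibly non-optimal) map $\id+v$ against $T_i$. This gives
\[
W_2\le\|Q+Gv-Q_i\|_{\mathcal{L}^2(0,1)},
\]
which is legitimate because any coupling upper-bounds $W_2$.

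Since $Z_i$ lies in $\mathrm{span}\{\psi_1,\dots,\psi_R\}$ almost surely (finite rank), we have $Q_i-Q=\sum_{j\le R}\langle Z_i,\psi_j\rangle\psi_j$. The error then splits as follows:
\begin{itemize}
\item the scores: $\hat\xi_{ij}$ versus $\langle Z_i,\psi_j\rangle$;
\item the eigenfunctions: $\hat\psi_j$ versus $\psi_j$;
\item the centring: the term $\hat Q-Q$.
\end{itemize}
Each piece is bounded by $\|\hat\psi_j-\psi_j\|$ or $\|\hat Q-Q\|$, times $\|Z_i\|\le\mathrm{diam}\,\mathcal{X}$. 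By items 1–2 each is therefore $O_p(n^{-1/2})=o_p(1)$, and summing over $R$ terms finishes the proof.

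I expect the main obstacle to be item 2. The difficulty is the careful bookkeeping that identifies $P$ with $G^{-1}\hat G$, together with the resulting eigenvector-perturbation argument. Once this is done, the remaining steps are routine Hilbert-space estimates.
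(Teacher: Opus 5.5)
Your proposal is correct and follows essentially the same route as the paper's proof. Item 1 uses quantile averaging of the barycenter. Item 2 uses an $O_p(n^{-1/2})$ bound on the operator discrepancy, followed by Weyl and the Bosq eigenvector lemma, with the spectral gap measured against the zero eigenvalue. Item 3 uses the coupling bound $W_2(\exp_\mu v,\nu_i)\le\|\id+v-T_i\|_{L^2(\mu)}$ together with the finite-rank assumption. The only difference is cosmetic: you conjugate by $G$ and $\hat G$ into $\mathcal{L}^2(0,1)$ rather than by $P$ into $L^2(\mathcal{X},\mu)$. These are the same since $GP=\hat G$, and your version writes the centring correction exactly as $-\bar Z\otimes\bar Z$ instead of bounding cross terms.
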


The estimated eigenfunctions and scores thus converge at the parametric $O_p(n^{-1/2})$ rate, and WTPCA admits a pointwise interpretation analogous to that of FPCA, cf.~Appendix~\ref{sec:FPCA}.

\begin{remark}[Imperfect observations]\label{rem:imperfect_obs}
While Proposition~\ref{prop:consistency} presumes direct observation of the measures, in practice one observes event times from point processes whose intensities are the unknown measures -- the setting of our application. \citet[][Theorem~1]{Panaretos2016} establishes that consistent estimation of the Fr\'echet mean remains possible in this setting under minimal conditions, so that the conclusions above carry over.
\end{remark}

\subsection{Vector-on-Vector Regression} \label{sec:multiv_reg}

Finally, we introduce the reader to multivariate (vector-on-vector) regression, which will be applied to \emph{the WTPC} scores mentioned above to analyse the restrictions' association with the evolution of COVID-19 infections, as shown in Section~\ref{sec:results}.

A vector-on-vector regression model is an extension of the usual linear regression model $\mathbf{y} = \mathbf{X \beta} + \boldsymbol{\epsilon}$ to the case of more than one dependent variable, which takes the following form: \[ \mathbf{Y} = \mathbf{X B} + \boldsymbol{\Xi}.\]
Here $\mathbf{Y} \in \mathbb{R}^{n \times p}$ is a matrix of $p$ observed response variables for each of the $n$ observational units (i.e.~each of the 50 states in our application), $\mathbf{X} \in \mathbb{R}^{n \times q}$ is a matrix of $q$ independent variables observed for each of the $n$ observational units, and $\boldsymbol{\Xi} \sim N_p(\mathbf{0}, \boldsymbol{\Sigma})$ is a matrix of unobserved randomly distributed disturbances whose rows for given $\mathbf{X}$ are uncorrelated.

The log-likelihood in terms of the matrix parameters $\mathbf{B}$ and $\mathbf{\Sigma}$ takes the form
\begin{equation*}
    l(\mathbf{B}, \mathbf{\Sigma}) = - \frac{n}{2} \text{log}|2 \pi \mathbf{\Sigma}| - \frac{1}{2} \text{tr} (\mathbf{Y} - \mathbf{XB}) \mathbf{\Sigma}^{-1} (\mathbf{Y} - \mathbf{XB})^T ),
\end{equation*}

which leads to the maximum likelihood estimator of $\mathbf{B}$ being:
\[ \hat{\mathbf{B}} = (\mathbf{X}^T \mathbf{X})^{-1} \mathbf{X}^T \mathbf{Y} .\]

The reader interested in the intermediate steps that lead to the above result can consult Chapter 6 of \citet{Mardia1979} for reference.

The estimator above captures the relationship between the dependent variables $\mathbf{Y}$ and the independent variables $\mathbf{X}$. Although the point estimates turn out to be the same as if $p$ separate linear models were fitted, proper statistical tests must account for the covarying columns~of~$\mathbf{Y}$~\citep{Maxwell2017}. This covariance must be considered when determining whether a predictor contributes to the response jointly. To address this, the multivariate analysis of variance with the Pillai test statistic is used \citep{Mardia1979, Maxwell2017}.

\section{Results}\label{sec:results}

We next present the result of our analysis, carried out on the fifty US states over their respective first-wave windows of the COVID-19 epidemics in 2020. It proceeds in three stages. We begin by registering the infection counts and examining what is left once the temporal dynamics have been factored out, which both illustrates the separation achieved by the procedure of Section~\ref{sec:registration} and shows that, for these data, the phase component is at least as informative as the amplitude one. We then apply WTPCA to the infection and stringency densities, obtaining for each state a small number of scores summarising their temporal dynamics, and likewise for their restrictions; these admit a transparent reading in terms of an overall time shift and of the flatness of the corresponding curve. Finally, we relate the two sets of scores through a vector-on-vector regression, from which our main finding emerges: it is the timing of the restrictions, rather than their overall extent, that associates with the shape of the infection curve.

As an initial exploratory step, we perform FPCA on the smoothed, registered, and log-transformed infection count curves. The mean and top two eigenfunctions are shown in Figure~\ref{fig:fpca}, while the corresponding score plot is shown in Appendix~\ref{sec:FPCA} (Figure~\ref{fig:fpca_scores}). The registration methodology of \citet{Panaretos2016} is consistent when the number of infection counts for each state is large. Since this is clearly the case here, it is reasonable to assume that registration has been successful. Under the Cox-process model, successful registration would leave intensity functions that differ only by multiplicative constants. On the log scale, the registered curves would therefore differ only by additive shifts. However, Figure~\ref{fig:fpca} (left) indicates that this is clearly not the case. While minor issues common to every real-data application should not be overemphasised, the relatively high percentage of variance explained (11~\%) and the clear interpretability of the second eigenfunction in Figure~\ref{fig:fpca} (right) provide rather strong applied evidence (cf.~the interpretation of FPCA in Appendix~\ref{sec:FPCA}) that the infection count processes are not rank one after registration.
Therefore, the Cox point process model adopted, for example, by \citet{gajardo2021cox,gajardo2023point} is arguably not the right model for infection counts. This can also be argued from the obvious lack of independence between individual infections \citep[cf.][]{li2023nonstationary, indriani2024lgcp, meyer2018self, reinhart2018review}.

\begin{figure}[t!]
  \centering
     \begin{tabular}{ccc}
           \includegraphics[width=0.31\textwidth]{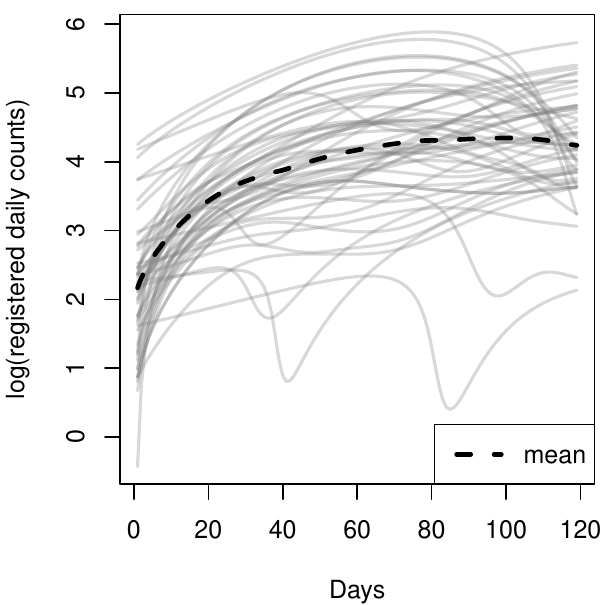}  &
           \includegraphics[width=0.31\textwidth]{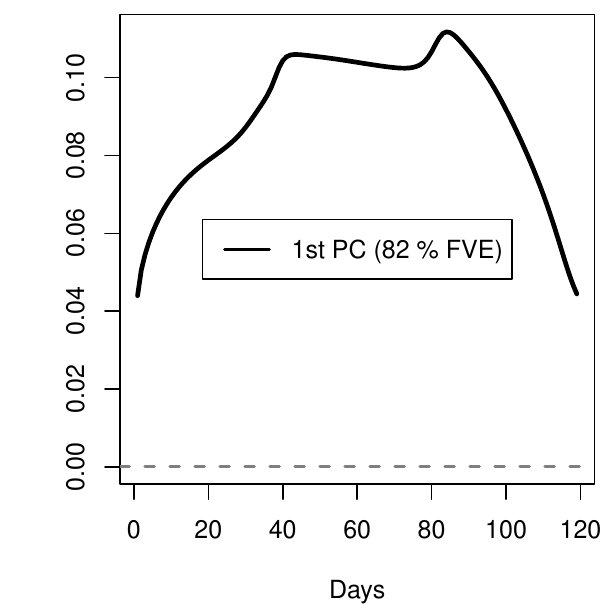} &
           \includegraphics[width=0.31\textwidth]{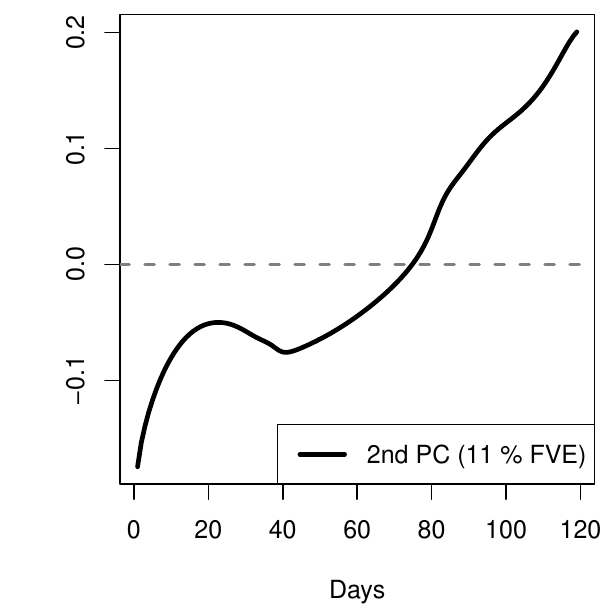}
    \end{tabular}
  \caption{FPCA of the registered infection daily counts (on log scale). \emph{
  From left to right, the plots show the log transform of registered counts, and the first and second PC eigenfunctions, which explain 82~\% and 11~\% of the variance respectively.}}
  \label{fig:fpca}
\end{figure}

\subsection{WTPCA of Infection Counts and the Stringency Index}

\begin{figure}[!htbp]
  \centering

  \begin{minipage}[t]{0.32\textwidth}
    \centering
    \includegraphics[width=\linewidth]{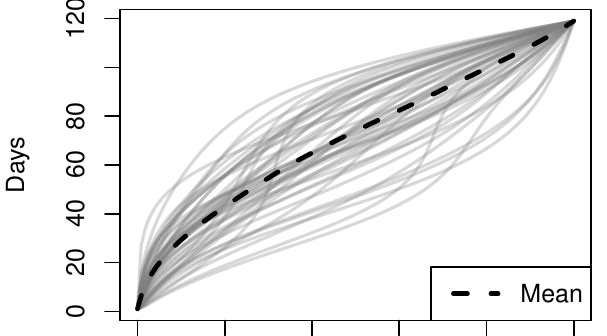}\\
    \includegraphics[width=\linewidth]{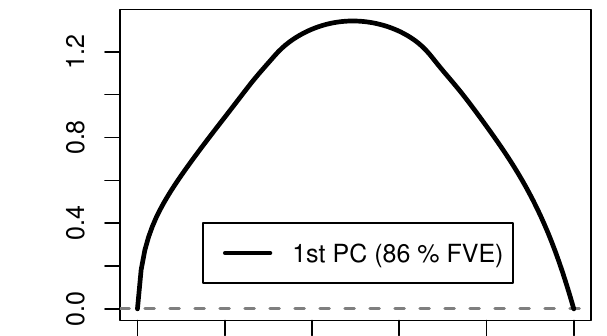}\\
    \includegraphics[width=\linewidth]{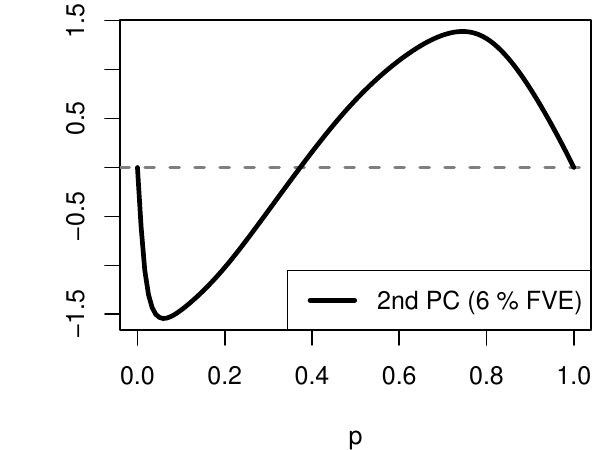}\\[1.5em]
    \includegraphics[width=\linewidth]{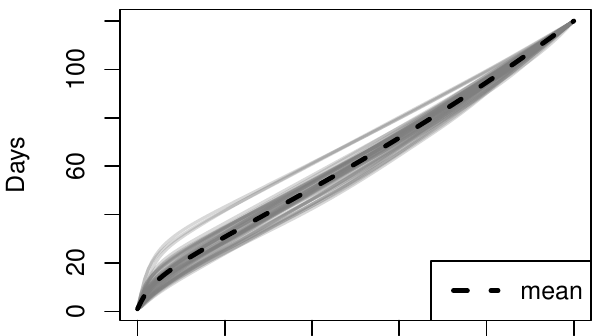}\\
    \includegraphics[width=\linewidth]{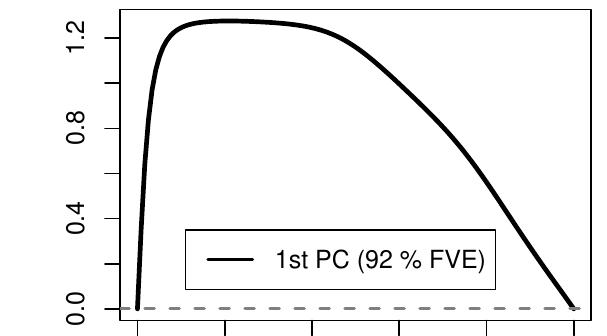}\\
    \includegraphics[width=\linewidth]{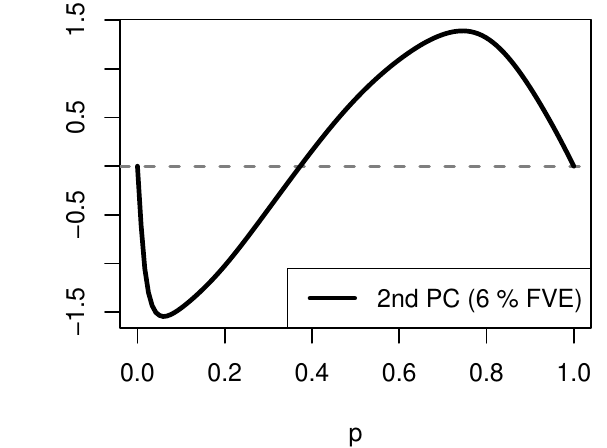}
  \end{minipage}
  \hfill
  \begin{minipage}[t]{0.6\textwidth}
    \centering
    \vspace{-3cm}
    \includegraphics[width=\linewidth]{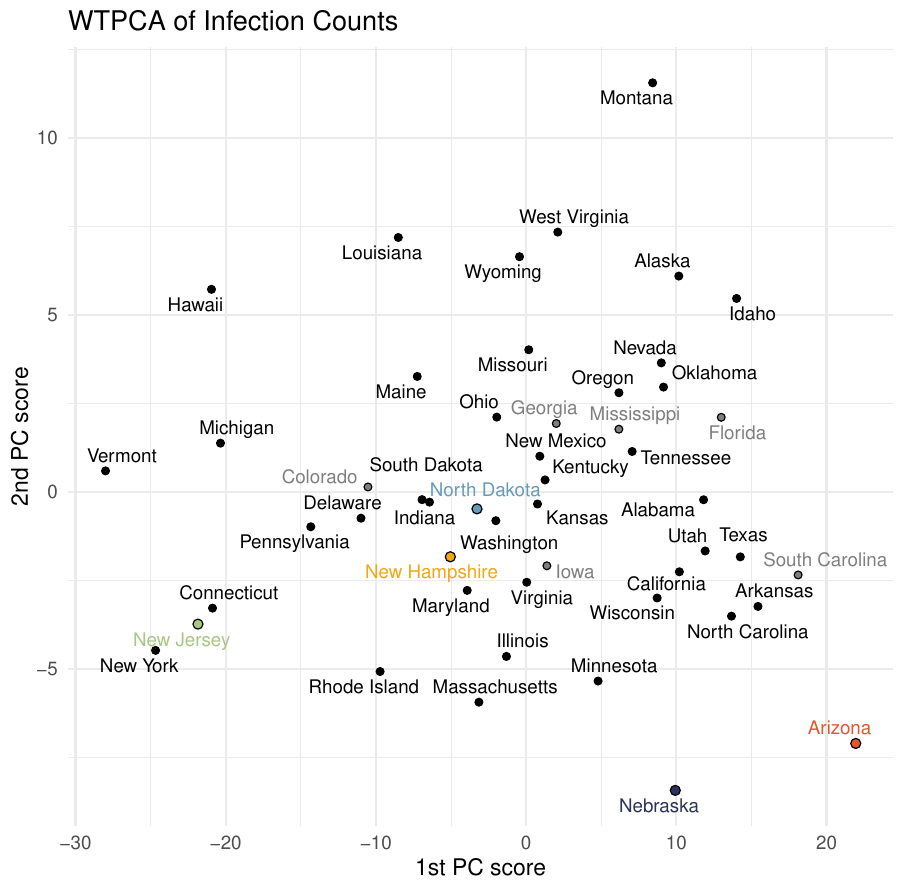}\\[1.5em]
    \includegraphics[width=\linewidth]{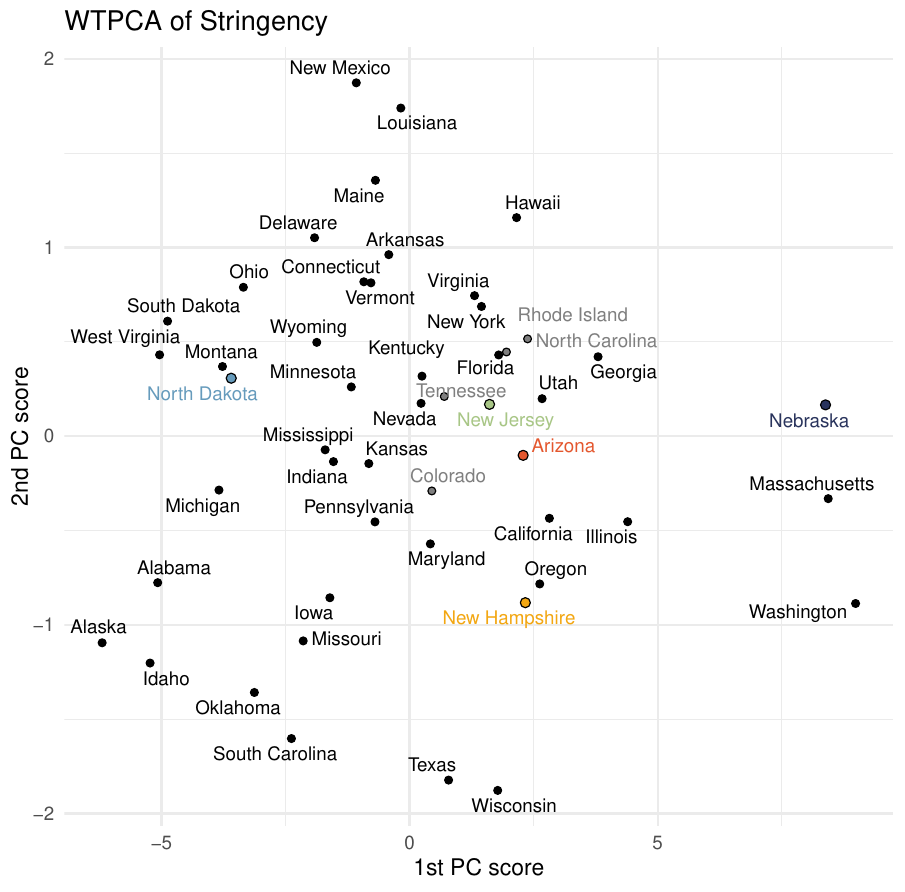}
  \end{minipage}

  \caption{Wasserstein PCA of the infection counts and the Stringency Index. 
  \emph{Left: quantiles and eigenfunctions for infection counts (top three panels) and for the Stringency Index (bottom three panels). Right: PC scores for infection counts (top) and Stringency Index (bottom). The colour convention follows that of Figure~\ref{fig:raw_data}, while some additional states are depicted in grey for the sake of better readability.}}
  \label{fig:WTPCA}
\end{figure}

Paving the way to the vector-on-vector regressions, we then performed PCA in the Wasserstein tangent-space as described in Section~\ref{sec:WTPCA} both for the infection counts and the stringency index. The results are shown in Figure~\ref{fig:WTPCA}. Each distribution $\nu_i$ is approximated by the rank-$R$ expansion
\begin{equation}\label{eq:WPCA_approx}
F_{\nu_i}^{-1}(u) \approx F_{\hat\mu}^{-1}(u) + \sum_{j=1}^{R} s_{i,j}\,\hat\phi_j(u), \qquad u \in (0,1),
\end{equation}
where $s_{i,j} = \langle F_{\nu_i}^{-1} - F_{\hat\mu}^{-1},\,\hat\phi_j\rangle_{L^2[0,1]}$ are the PC scores and $\hat\phi_j$ are the estimated eigenfunctions (right singular vectors of the centred quantile matrix, cf.~\ref{item:svd} above).

Interpreting these results requires some care, since the analysis is carried out on the quantile scale, where the roles of the two axes are interchanged relative to the densities one is accustomed to reading: a quantile function records the time by which a given fraction of the mass has accumulated, rather than the mass present at a given time. Consider the mean and the first two eigenfunctions of the Oxford stringency index, shown in the bottom three panels of the left column of Figure~\ref{fig:WTPCA}. By \eqref{eq:fpe}, the mean is the pointwise average of the state-level quantile functions, and is thus the quantile function of the Fr\'echet mean $\hat\mu$, while the eigenfunctions are the successively orthogonal directions of greatest variation of the data projected onto the tangent space at $\hat\mu$. It follows from \eqref{eq:kl-quantile} that a state whose score along the $k$-th component exceeds the average by $\delta$ has its $u$-th quantile displaced by $\delta\,\hat\psi_k(u)$, for every $u$. Figure~\ref{fig:WPCA_interpretation} (top-left) displays this for the first component, adding and subtracting a multiple of $\hat\psi_1$ from the mean quantile function; the bottom-left panel shows the same perturbation on the level of densities, obtained by mapping back through the exponential map at $\hat\mu$.

Since the 1st PC is all positive, we can assign it the following interpretation. Adding (resp.~subtracting) the 1st PC to any quantile function pushes the quantiles up (resp.~down) shifting the corresponding density right, resp.~left. The 1st PC thus captures the overall time shift. Similarly, the 2nd PC captures the flatness/spikedness of the pandemic evolution, as we next motivate. Since the 2nd PC is initially negative and later positive, it contrasts the lower quantiles to upper quantiles. And since the upper quantiles are naturally higher than lower quantiles, adding (resp.~subtracting) the 2nd PC either moves the lower and upper quantiles further apart, (resp.~closer together), as shown in Figure~\ref{fig:WPCA_interpretation} (top-right). As a result, the density becomes flatter (resp.~more spiky), as shown in Figure~\ref{fig:WPCA_interpretation}.

The interpretation of PCs belonging to the case counts is qualitatively similar. The 1st PC is again all positive, so moving along its direction in the quantile space pushes the corresponding density to the right (or left). The 2nd PC is again first negative and then positive, so moving along it in the quantile space influences the spikedness/flatness of the corresponding density.

\begin{figure}[!t]
  \centering
    \begin{tabular}{cc}
           \includegraphics[width=0.35\textwidth]{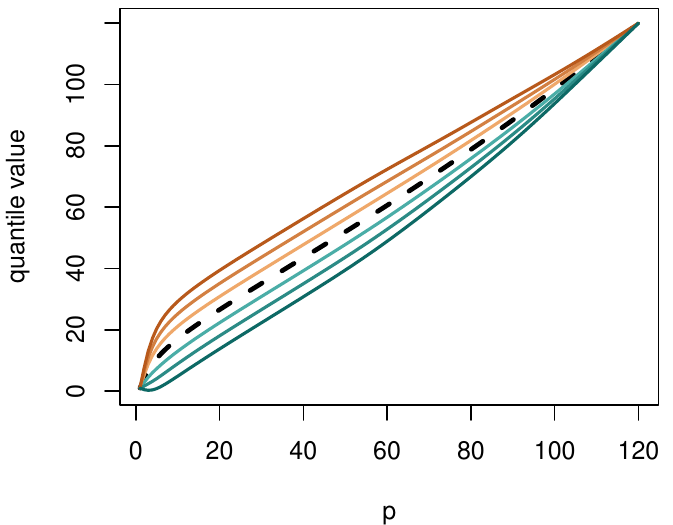}  &
           \includegraphics[width=0.35\textwidth]{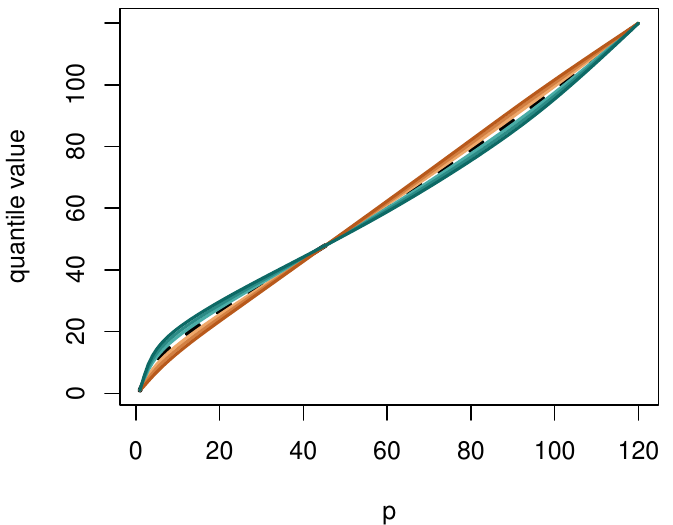} \\\\
           \includegraphics[width=0.35\textwidth]{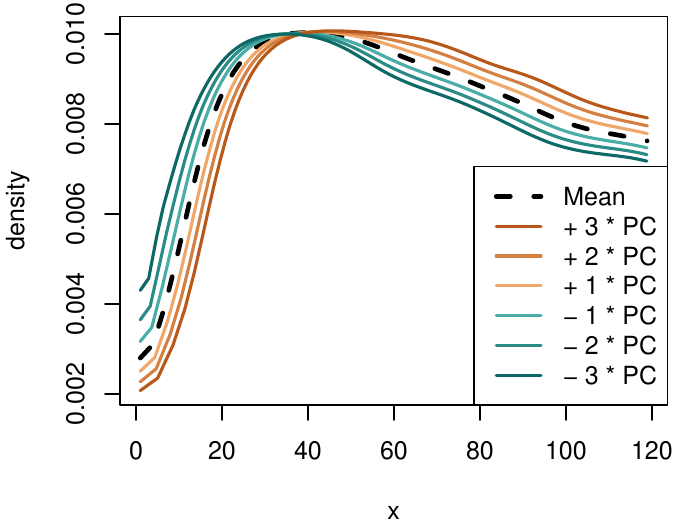}  &
           \includegraphics[width=0.35\textwidth]{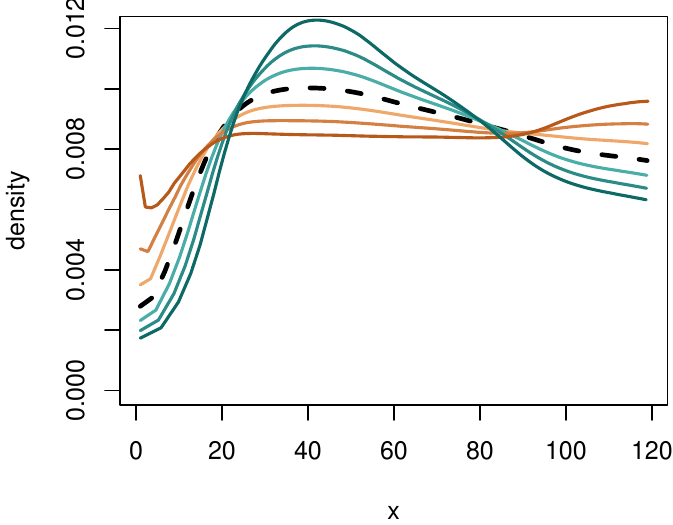}
  \end{tabular}
  \caption{Interpreting scores of Wasserstein PCA. \emph{The top row shows the mean quantile function of the stringency index (dashed) plus/minus constant times the 1st (left) and 2nd (right) eigenfunctions of the stringency index. The bottom row shows the same on the level of densities.}}
  \label{fig:WPCA_interpretation}
\end{figure}

Equipped with this interpretation, we can now easily assign meaning to the score plots, shown in the right column of Figure~\ref{fig:WTPCA}. Let us start with the cases. A positive 1st PC score suggests that the corresponding state had the underlying density of the infection counts pushed right compared to the mean, i.e.~that the 1st wave of the pandemic hit that state later than average. Secondly, a positive 2nd PC score is associated with a flatter than average corresponding density. For example, Arizona has an extremely high 1st PC score and an extremely low 2nd PC score, meaning that it was hit later and the infection counts spiked more than for an average state. Note that flatness of the infection counts is particularly desirable: several studies have shown that a flatter curve and, thus, a reduction in the peak of the outbreak is critical to reducing the burden on the healthcare-system capacity and, thus, ensuring care for the most fragile patients \citep{Miller2022, JohnsHopkins2020, CDC2007, Gavin2020}.

Similarly, for the Oxford Stringency Index, high 1st PC score means the restrictions were distributed later than average, while high 2nd PC score means these were distributed more uniformly over time. Thus, for example, Nebraska implemented the mitigation measures later than the bulk of the states, but was average in the temporal focus of the stringency. This can also be seen in the shape of Nebraska's stringency curve in Figure~\ref{fig:raw_data} (right). Secondly, states such as New Mexico or Louisiana had flatter stringency curves, while states such as South Carolina, Texas, or Wisconsin had more temporally focused mitigation measures. Finally, we should mention that the shape of the 1st eigenfunction of stringency suggests that the variability in stringency distribution between the states was highest for lower quantiles (say between 0.15 and 0.4). This aligns with the empirical observation that, after a while, most states aligned with nationwide restrictions, leading to a substantial reduction in differences in higher quantiles.

These scores serve as the inputs of the vector-on-vector regression model of Section~\ref{sec:multiv_reg}, alongside the two scalars set aside in Section~\ref{sec:data} -- the total infection count $c_n$ and the overall stringency budget $b_n$ of each state -- and the control variables, namely the log-GDP per capita and the log-population density. How many components to retain is largely settled in the case of the stringency index, whose first two already account for almost 99~\% of the tangent-space variance, but is less clear-cut for the infection counts, where the first two explain $>95\,\%$ and a third would add a further $<2.5\,\%$. We retain two components throughout, both for uniformity of presentation and because the conclusions of the regression are unaffected by the choice; we return to this point, and to the accuracy of the resulting rank-two approximation, in Section~\ref{sec:stability}.

\subsection{Regression Analysis and Interpretation}

It remains to relate the two sets of scores, which we carry out by vector-on-vector regression. For the $n$-th state we take a three-dimensional response, consisting of the total infection count $c_n$ together with the two case scores $\xi_{n1}, \xi_{n2}$, and regress it on the stringency budget $b_n$, the two stringency scores $\zeta_{n1}, \zeta_{n2}$, and the two controls $g_n$ (log-GDP per capita) and $d_n$ (log-population density). A multivariate model -- rather than three separate univariate ones -- is preferred, because of the clear relationship in the three responses, and this covariance has to be accounted for in assessing whether a predictor contributes to them jointly.

Table~\ref{tab:summary} shows the fitted model where the control variables are either included or excluded. The full model can be written in matrix form as
\begin{equation}\label{eq:model}
\E\begin{pmatrix}
    c_1 & \xi_{11} & \xi_{12} \\
    c_2 & \xi_{21} & \xi_{22} \\
    \vdots & \vdots & \vdots \\
    c_{50} & \xi_{50,1} & \xi_{50,2} \\
\end{pmatrix} =
\begin{pmatrix}
    1 & b_1 & \zeta_{11} & \zeta_{12} & g_1 & d_1 \\
    1 & b_2 & \zeta_{21} & \zeta_{22} & g_2 & d_2 \\
    \vdots & \vdots & \vdots & \vdots & \vdots & \vdots \\
    1 & b_{50} & \zeta_{50,1} & \zeta_{50,2} & g_{50} & d_{50} \\
\end{pmatrix}
\begin{pmatrix}
    \beta_{0,c} & \beta_{0,1} & \beta_{0,2} \\
    \beta_{1,c} & \beta_{1,1} & \beta_{1,2} \\
    \vdots & \vdots & \vdots \\
    \beta_{5,c} & \beta_{5,1} & \beta_{5,2} \\
\end{pmatrix},
\end{equation}
where the first index of a coefficient refers to the covariate, in the order in which the columns above are listed, and the second to the response: the subscript $c$ pertains to the total infection count, and the subscripts $1$ and $2$ to the first and second case scores respectively.

While Table~\ref{tab:summary} provides the fitted coefficients together with their standard errors and significance codes as fitted by \verb|lm()| in the statistical software package R \citep{R}, the significance is obtained from three univariate-response regression models. To assess joint significance, Table~\ref{tab:manova} performs multivariate analysis of variance using the Pillai test statistic.

\begin{table}[!t]
\centering
\caption{Regression outputs. \emph{Estimated coefficients together with their standard errors (in parentheses) and significance codes are given for the full Model and the Submodel (without the control covariates). \\Significance codes: ‘***’ $p<0.001$; ‘**’ $p<0.01$; ‘*’ $p<0.05$; ‘$\cdot$’ $p<0.1$.}}
\begin{tabular}{
  l
  S[table-format=-4.3]
  S[table-format=4.3]
  c
  S[table-format=-5.3]
  S[table-format=5.3]
  c
}
\toprule
& \multicolumn{3}{c}{\textbf{Submodel}} & \multicolumn{3}{c}{\textbf{Model}} \\
\cmidrule(lr){2-4} \cmidrule(lr){5-7}
\textbf{Covariates} & \multicolumn{1}{c}{Est.} & \multicolumn{1}{c}{S.E.} & \multicolumn{1}{c}{Sig.}
& \multicolumn{1}{c}{Est.} & \multicolumn{1}{c}{S.E.} & \multicolumn{1}{c}{Sig.} \\
\midrule
\multicolumn{7}{l}{\textbf{Response: Total Cases}} \\
(Intercept)         & 4188.8 & 7099.4 &    & -55650 & 34770 &     \\
Stringency\_PC1     & 29.30   & 17.74   &    & -18.08    & 17.78    &     \\
Stringency\_PC2     & 68.01   & 77.87   &    & 111.60    & 64.73    & $\cdot$   \\
Stringency\_Budget  & 0.39    & 0.87    &    & -0.92     & 0.77     &     \\
log\_GDP\_pc        &  &  &  & 5583  & 3148  & $\cdot$   \\
log\_pop\_dens     &  &  &  & 1967   & 449.3   & *** \\

\addlinespace
\multicolumn{7}{l}{\textbf{Response: PC1}} \\
(Intercept)         & 418.19  & 181.88 & *   & 4117.3 & 871.9 & *** \\
Stringency\_PC1     & -0.20   & 0.45   &     & 1.05    & 0.45   & *   \\
Stringency\_PC2     & -3.13   & 1.99   &     & -4.07   & 1.62   & *   \\
Stringency\_Budget  & -0.05   & 0.02   & *   & -0.03   & 0.02   &     \\
log\_GDP\_pc        &  &  &  & -341.11 & 78.93  & *** \\
log\_pop\_dens     &  &  &  & -27.81  & 11.27  & *   \\

\addlinespace
\multicolumn{7}{l}{\textbf{Response: PC2}} \\
(Intercept)         & -33.22  & 59.74  &     & 335.95  & 327.8 &     \\
Stringency\_PC1     & -0.64   & 0.15   & *** & -0.35   & 0.17   & *   \\
Stringency\_PC2     & 0.41    & 0.66   &     & 0.14    & 0.61   &     \\
Stringency\_Budget  & 0.004   & 0.007  &     & 0.01    & 0.007  &     \\
log\_GDP\_pc        &  &  &  & -34.44  & 29.68  &     \\
log\_pop\_dens     &  &  &  & -12.03  & 4.24   & **  \\

\bottomrule
\end{tabular}
\label{tab:summary}
\end{table}

We begin by interpreting the (simpler) submodel. The MANOVA table (Table~\ref{tab:manova}) reveals the 1st PC of stringency to be the only significant coefficient. From the summary table (Table~\ref{tab:summary}), it is apparent that this significance is due to the relationship between the 1st PC of stringency and the 2nd PC of the infection curves. The regression coefficient is negative, signalling that earlier stringency is associated with flatter infection counts.

When the important control variables are added, the situation changes. The effect reported in the previous paragraph is still present, but mitigated. The control variables alone are highly significant, most notably higher population density is associated with higher total cases, and higher GDP is associated with an earlier increase in infection counts, consistent with previous studies \citep{Ferreira2021, Gong2022}. More importantly, in the full model, both the control variables and the two PCs of stringency are significant. Apart from the control variables, a notable association appears between the 1st PC of cases and 2nd PC of stringency: a flatter stringency profile is associated with an earlier temporal distribution of cases. Secondly, the 1st PC of stringency is now also associated with the 1st and the 2nd PC of cases in the full Model: later stringency is associated with more spiked case curves. Put differently, earlier stringency goes together with flatter infection counts. the total stringency budget is not significant: an overall increase in stringency is not found to be associated either with the total infection count or with the temporal characteristics of the infection curve.

Finally, the results in Tables~\ref{tab:summary} and~\ref{tab:manova} suggest that no aspect of stringency is significantly associated with total infection counts and that the overall stringency budget is not significantly associated with any of the infection outcomes considered. We assess these conclusions jointly using a likelihood-ratio test comparing the full model with the restricted model obtained by setting the following coefficients in the mean structure~\eqref{eq:model} to zero:
\[
H_0: \beta_{3,c}=\beta_{2,c}=\beta_{1,c}=\beta_{1,1}=\beta_{1,2}=0.
\]
The resulting p-value is approximately 0.2, so the null hypothesis is not rejected. Thus, the joint test provides no statistically significant evidence against the two aforementioned conclusions.

\begin{table}[!t]
\centering
\caption{Multivariate analysis of variance (MANOVA). \emph{The table reports the Pillai test statistic and p-values of the MANOVA tests performed on the coefficients of the Submodel and the full Model to test the regressors' joint contribution to the response variables. Significance codes: ‘***’ $<$ 0.001; ‘**’ $<$ 0.01; ‘*’ $<$ 0.05}.}
\begin{tabular}{lcccc}
\toprule
~ & \multicolumn{2}{c}{\textbf{Submodel}} & \multicolumn{2}{c}{\textbf{Model}} \\
\textbf{Covariate} & \textbf{Statistic} & \textbf{P-value} & \textbf{Statistic} & \textbf{P-value} \\
\midrule
Stringency budget & 0.115 & 0.142 \textcolor{white}{**} & 0.095 & 0.239 \textcolor{white}{*} \\
Stringency PC 1 & 0.295 & 0.001 ** & 0.219 & 0.015 * \\
Stringency PC 2 & 0.084 & 0.272 \textcolor{white}{**} & 0.184 & 0.034 * \\
log\_GDP\_pc & ~ & ~ & 0.353 & $< 10^{-3}$ *** \\
log\_pop\_dens & ~ & ~ & 0.394 & $< 10^{-3}$ *** \\
\bottomrule
\end{tabular}
\label{tab:manova}
\end{table}

To summarise, the fitted models associate the \emph{timing} of the restrictions, rather than their overall extent, with the shape of the infection curves: states whose stringency mass was placed earlier tend to display flatter case curves, possibly reflecting a slower temporal progression of infections, whereas the total stringency budget shows no discernible association with the total number of cases. We stress that these are associations and not effects, for the reasons set out in Section~\ref{sec:stability}, and that the contrast between the two is what the analysis speaks to, rather than the merits of restrictions as such.

\subsection{Stability Analysis and Potential Deficiencies}\label{sec:stability}
The data analysis in the previous section naturally depends on several judgement calls we needed to make. These include the rules leading to the time domain choices for every state (cf.~Figure~\ref{fig:time_windows}), the smoothing applied to the case counts as well as to the stringency, the transformations of the control variables, and the specific choice of the control variables (e.g.~which election should be used for the political control). Two further significant choices were to retain exactly two WTPCs for both the infection counts and the stringency, and to exclude the District of Columbia from the data set as a (rather obvious) outlier and leverage point.

The accompanying repository contains scripts that can be used to reproduce the analysis and in which, moreover, the judgement calls can be easily tweaked. While we believe the choices made in this manuscript to be sensible ones, it is comforting that the results are stable with respect to them. Indeed, we found that the most important qualitative conclusions -- that earlier stringency is associated with flatter case curves, and that a larger stringency budget shows no significant association with lower case counts -- are quite robust and can be reached regardless of the specific choices, as long as the core first wave time window (from mid-March to mid-June) is included.

This is reassuring, since the point process, phase-variability-oriented viewpoint we adopt depends on capturing a realisation of the same underlying process for each observation. While the perfect choice of the temporal window is difficult to make for every state, the stability just described encourages us not to consider this a big problem.

An important aspect of our approach is that it requires the standardised COVID-19 infection counts of the different US states to be i.i.d. This assumption is presumably not satisfied here, but the same criticism can be raised in most applications, and the model diagnostics in Appendix~\ref{sec:regression-diagnostics} did not reveal any systematic or spatial patterns.

Furthermore, one has to be careful when interpreting the outputs of the regression modelling. No causal conclusions should be drawn from the model. Ours is an observational study, in which responses and covariates stand in a clear feedback loop; this amounts to simultaneous causality bias, which is notoriously difficult to correct for \citep{Roberts2013,Ketokivi2017}. Finally, the recorded case counts are an imperfect measurement of the underlying incidence: testing capacity was severely limited in the early weeks of the pandemic and expanded at a pace that differed across states, so that the early portion of each curve is undercounted to an extent that is itself related to when a~state was hit. All three issues call for careful judgement when interpreting the fitted regression model, and are the reason why we refrain from a quantitative interpretation and from any causal reading of the associations reported above.

\begin{figure}[!t]
  \centering
    \begin{tabular}{cc}
           \includegraphics[width=0.35\textwidth]{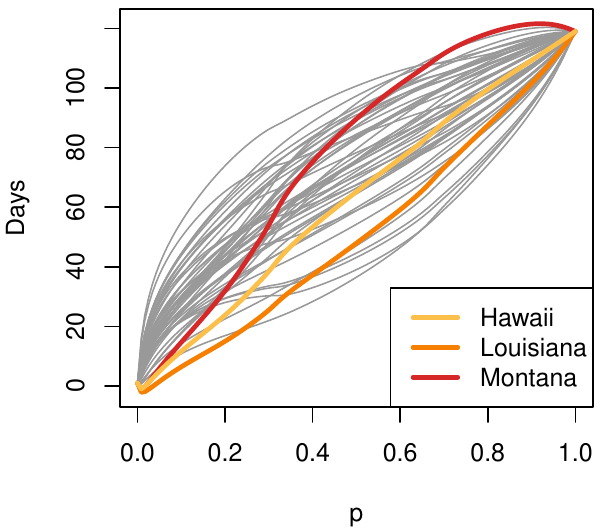}  &
           \includegraphics[width=0.35\textwidth]{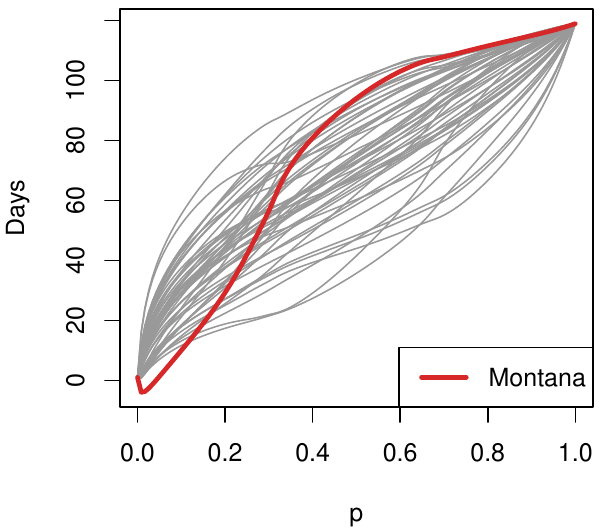}
  \end{tabular}
  \caption{WTPCA quantile projections of infection count curves. \emph{Left: rank two. Right: rank three.}}
  \label{fig:proj_quantiles}
\end{figure}

Further, WTPCA is naturally not the only possible approach to handle distributional data. While many alternatives exist \citep{carol2020,gajardo2023point,chen2023wasserstein,petersen2016functional,petersen2022modeling,pegoraro2022projected}, WTPCA lends itself to a simple interpretation that can be readily incorporated into the standard regression workflow. We thus believe WTPCA can serve as a conceptually close alternative to the common practice in regression analyses in which inherently distributional covariates are reduced to a few scalar summaries to be entered into the model.
Among these alternatives, one line of work \citep{bigot2017geodesic,cazelles2018geodesic,campbell2025efficient} replaces the linearisation around a reference measure by Geodesic PCA, which seeks principal paths along Wasserstein geodesics and thus respects the original geometry, at the price of a nonconvex optimisation problem even in the univariate case. This work also makes clear that WTPCA involves a double approximation: besides the finite-dimensional truncation, geodesic distances are replaced by linearised ones in the tangent space, the two agreeing only under a concentration condition that is (difficult to verify, but) interpretable as data concentration.

Figure~\ref{fig:proj_quantiles} shows the rank-2 and rank-3 projections of the data on the quantile level. We see that the rank-3 projections are more faithful at the edge of the domain, which implies that keeping three WTPCA scores instead of just two for the infection count curves might have been a better choice here. Moreover, even when rank three is chosen, Montana still remains problematic. However, this allows us to circle back to the stability analysis: we chose rank two for the simplicity of presentation and decided to keep Montana in the data set, and neither choice affects the conclusions drawn from our regression model. Another way of alleviating these approximation issues would be to adjust the studied state-specific time periods, which would again not alter the conclusions.

\section{Summary}

Despite the limitations discussed above, the main qualitative conclusion remains stable. Firstly, earlier restrictions are associated with flatter infection curves. Secondly, no aspect of stringency is significantly associated with total infection counts, and the overall stringency budget is not significantly associated even with the temporal characteristics of the infection curves.

On the other hand, a great strength of the proposed approach is its reliance on only mild modelling assumptions. Viewing the infection counts as point processes together with the other assumptions of \citet{Panaretos2016} is fairly uncontroversial, and the (degree of) smoothing chosen is well motivated in this application. Subsequently,
we provide a self-contained and practically implementable formulation of Wasserstein tangent-space PCA for distribution-valued covariates and responses. We also show how its scores, together with the corresponding total masses, can be incorporated into standard multivariate regression while retaining a direct transport-based interpretation.

Finally, this case study demonstrates how distributions can be incorporated in typical regression analyses by means of Wasserstein tangent-space PCA, which serves a similar regularisation purpose as functional PCA does in functional regression modelling. At the same time, it offers a~conceptually close alternative -- equally easy to estimate and explain -- to the summary-statistic approach to covariates that are naturally distribution-valued. Such covariates appear quite commonly in applied work; examples include age, income, household size, commuting time, test scores, pollution exposure, and other variables.

\phantomsection
\section*{Code Availability}\label{sec:code-availability}

Code to reproduce all analyses and figures as well as various stability checks is available at \url{https://github.com/FrancescoTripoli/Amplitude-phase-covid19}.

\setlength{\bibsep}{0pt plus 0.3ex}
\renewcommand{\baselinestretch}{1}
\bibliographystyle{rss}
\bibliography{biblio}

\clearpage
\appendix
\counterwithin{figure}{section}
\counterwithin{table}{section}
\renewcommand{\thefigure}{\thesection.\arabic{figure}}
\renewcommand{\thetable}{\thesection.\arabic{table}}
\setcounter{figure}{0}
\setcounter{table}{0}

\section{Functional PCA} \label{sec:FPCA}

PCA provides the most prominent data-driven dimension reduction technique \citep{Jolliffe2004},  and is often referred to as the workhorse of data analysis. In infinite dimensional contexts, its functional generalisation enjoys an even more prominent status \citep{happ2018multivariate}.
In this subsection we briefly review how to carry out this dimension-reduction technique in the general setting of functional data valued in  $\mathcal{L}^2[0,1]$.

Let $X$ be a random element in $\mathcal{L}^2[0,1]$ with mean $\mu$ and covariance operator $C$, where $C = \mathbb{E}[(X-\mu)\otimes (X-\mu)]$, and the tensor product generalises the euclidean outer product. The spectral theorem \citep{mercer1909xvi} entails that we may express the covariance in terms of its spectral eigendecomposition:
$$
C = \sum_{j=1}^\infty \lambda_j\, e_j \otimes e_j,
$$
where $\{ \lambda_j \}_{j=1}^\infty $ is a non-increasing, non-negative sequence of eigenvalues and $\{ e_j \}$ is an orthonormal basis in $\mathcal{L}^2[0,1]$ of eigenvectors. Naturally, this gives the following expansion for the random element $X$ itself in terms of the same basis:
\begin{equation}\label{eq:KL}
    X -\mu = \sum_{j=1}^\infty \langle X-\mu, e_j \rangle e_j.
\end{equation}
It can be easily shown that $\{ \langle X, e_j \rangle \}_{j=1}^\infty$ are uncorrelated random variables with mean zero and variances $\{ \lambda_j \}_{j=1}^\infty $.

Equation~\eqref{eq:KL} is the core of (functional) PCA. It allows us to decompose the stochastic fluctuations of the process $X$ around its mean into random but scalar constituents, i.e.~the scores \{$\langle X, e_j \rangle$\}, and a functional but deterministic part, i.e.~the eigenfunctions $\{e_j\}$. Since $\{e_j\}$ are orthonormal, they can be understood as different directions in $\mathcal{L}^2[0,1]$ that capture the separate modes of variation. And, since the scores \{$\langle X, e_j \rangle$\} are uncorrelated, these modes of variation contribute to the overall variability of the process $X$ in a linearly independent manner. Finally, since the variance of the scores is non-increasing, truncating the sum on the right-hand side of~\eqref{eq:KL} at some finite $r \in \N$ leads to the best $r$-dimensional approximation of the centred process $X - \mu$ \citep[][Thm.~7.2.8]{hsing2015}.

Estimation of the functional principal components is straightforward. Based on a random sample of curves $X_1,\ldots,X_N$, the mean and covariance are naturally estimated by their empirical counterparts
\[
\widehat{\mu} = \frac{1}{N} \sum_{n=1}^N X_n \;, \qquad \widehat{C} = \frac{1}{N} \sum_{n=1}^N (X_n - \widehat{\mu}) \otimes (X_n - \widehat{\mu}).
\]
In turn, taking the spectral eigendecomposition of $\widehat{C}$ we directly obtain estimates of the scores and eigenfunctions \cite{bosq2000linear}. By selecting a truncation parameter, e.g.~at rank $r$, the functional data $X_i$ can be represented as $r$-dimensional random vectors, with components given by the corresponding functional principal component scores $(\langle X_i, e_1 \rangle , \dots,\langle X_i, e_r \rangle )$.

The equalities and claims above are naturally understood in the $\mathcal{L}^2$-sense. However, when $X$ is mean-square continuous, they also hold in the uniform sense \citep{loeve1948functions,
karhunen1946spektraltheorie}. This is important for interpretation: we can understand eq.~\eqref{eq:KL} point-wise as $X(t) = \mu(t) + \sum_{j=1}^\infty \langle X, e_j \rangle e_j(t)$ and assign interpretation to the modes of variation $\{e_j\}$ in a point-wise manner as well. This in turn allows one to assign a domain-specific meaning to the eigenfunctions, revealing patterns in random fluctuation of the data.

\begin{figure}[b!]
  \centering
  \includegraphics[width=0.9\linewidth]{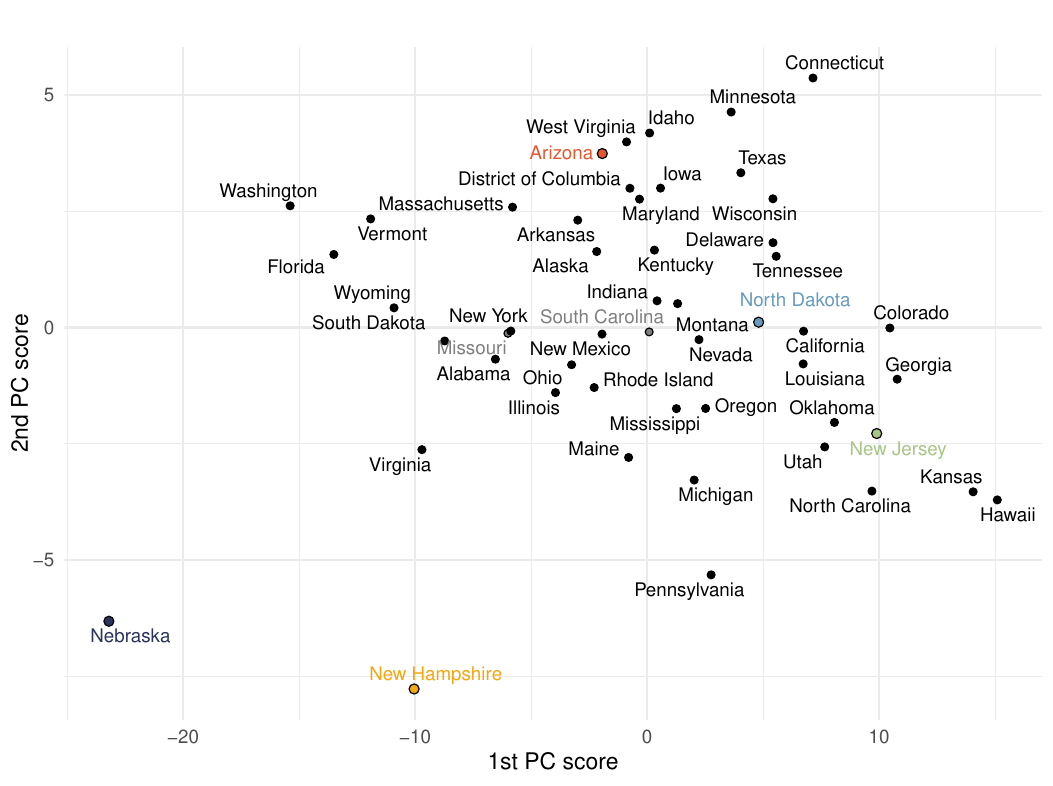}
  \caption{Score plot for the FPCA of log-registered daily infection counts. The corresponding eigenfunctions are shown in Figure~\ref{fig:fpca}.}
  \label{fig:fpca_scores}
\end{figure}

For example, Figure~\ref{fig:fpca} displays the eigenfunctions of the registered log-transformed infection count curves, leading to the two-dimensional representation of these through their PC scores in Figure~\ref{fig:fpca_scores}. The eigenfunctions show that the infection counts exhibit two main patterns of variation. In particular, the 1st eigenfunction is all positive capturing the overall level: the individual states reported infection numbers that either exceed (for corresponding positive scores) or fell short of the average, uniformly over the domain. Moreover, the upward trend indicates an increasing deviation of the daily counts from the average in the first 40 days, after which the variability remained roughly constant, and then plummeted as of day 90 of the pandemic. Consequently, states with higher scores for the 1st PC (those positioned far to the right in the score plot displayed in Figure~\ref{fig:fpca_scores}) had generally higher registered daily infection counts during the first wave. On the other hand, the 2nd eigenfunction captures a different mode of variation in infection numbers, contrasting the initial part of the studied period against its end. States with the 1st PC score near zero and a high 2nd PC score are characterised by initially reporting infection numbers below the average, catching up to the mean around about day 75, and continuing to exceed the mean daily infections for the rest of the period.

Based on the aforementioned interpretations associated with the principal components, keeping two components here allows us to discern four overarching qualitative patterns typically observed among states. Those situated in the first quadrant of the score plot Figure~\ref{fig:fpca_scores} -- exemplified by states like Texas or Connecticut, for which both scores are positive -- had relatively high infection counts overall, with a larger share occurring in the second half of the first-wave window. By contrast, states like Nebraska or New Hampshire, marked by negative scores for both components, were affected slightly more at the beginning but were less impacted compared to other states. Thirdly, intermediate scenarios exist, featuring states that experienced initial surge in infection counts and were impacted more overall (e.g.~Hawaii or New Jersey). And, finally, certain states displayed sharp growth initially, but later stabilised leading to overall below-average infection curve (e.g.~Washington or Vermont).
\section{Regression Diagnostics}\label{sec:regression-diagnostics}

For the full model summarised in Table~\ref{tab:summary}, we present residual and influence diagnostics here.

Firstly, Figure~\ref{fig:resid_plots} shows the standardised residuals of the three responses plotted against the fitted values, and their quantiles against the quantiles of the standard normal distribution. No serious heteroscedasticity or distributional issues are observed.

\begin{figure}[t!]
  \centering
  \includegraphics[width=0.9\linewidth]{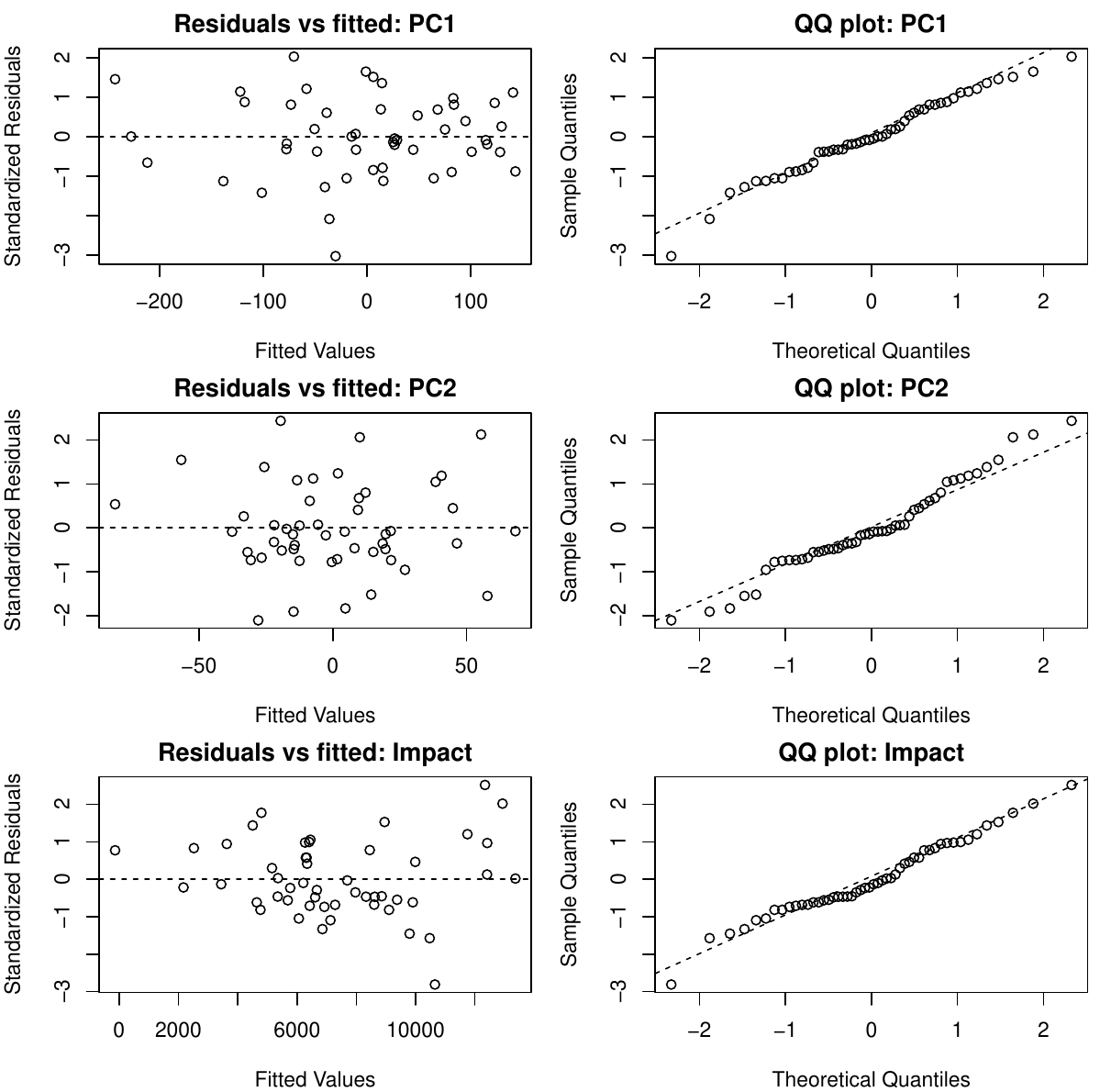}
  \caption{Standardised residuals vs.~fitted values (left) and QQ plots of the standardised residuals (right) for the three individual responses forming the response vector (the 1st response PC at the top row, the 2nd response PC in the middle row, and the total cases in the bottom row).}
  \label{fig:resid_plots}
\end{figure}

Secondly, in order to inspect the vector residuals jointly, we utilise the residual Mahalanobis distance and the multivariate Cook's distance \citep{seber2003}. Let $e_n \in \R^3$ be the residual vector corresponding to the $n$-th state, $n=1,\ldots,50$. The squared Mahalanobis distance $D_n^2$ is defined by
\[
D_n^2 = e_n^\top S_e^{-1} e_n,
\]
where $S_e$ is the empirical covariance matrix of the residuals. As such, the Mahalanobis distance measures how unusual an observation’s multivariate residual vector is. This is shown in Figure~\ref{fig:mahalanobis_cook}, left. In addition, the multivariate Cook's distance is given by the inner product
\[
C_n = \frac{1}{pq}\left\langle \widehat B-\widehat B_{-n}, \left(
S_e^{-1}
\otimes
X^{\top}X
\right) (\widehat B-\widehat B_{-n}) \right\rangle,
\]
where $S_e$ is as for the Mahalanobis distance above, $B_{-n}$ denotes the coefficient matrix estimate using the data set without the $n$-th observation, the $\otimes$ symbol denotes the outer product, and $pq$ denotes the total number of coefficients. Thus, similarly to the univariate Cook's distance, $C_n$ summarises the joint change in all response-specific regression coefficients caused by deleting a single observation $n$, with changes weighted according to the design geometry and the estimated covariance among the responses. Large values indicate observations with substantial influence on the fitted multivariate regression. For our model, the multivariate Cook's distances are displayed in Figure~\ref{fig:mahalanobis_cook}, right, with the common rule-of-thumb cutoff value $4/(nq)$.

\begin{figure}[t!]
  \centering
  \begin{tabular}{cc}
\includegraphics[width=0.45\linewidth]{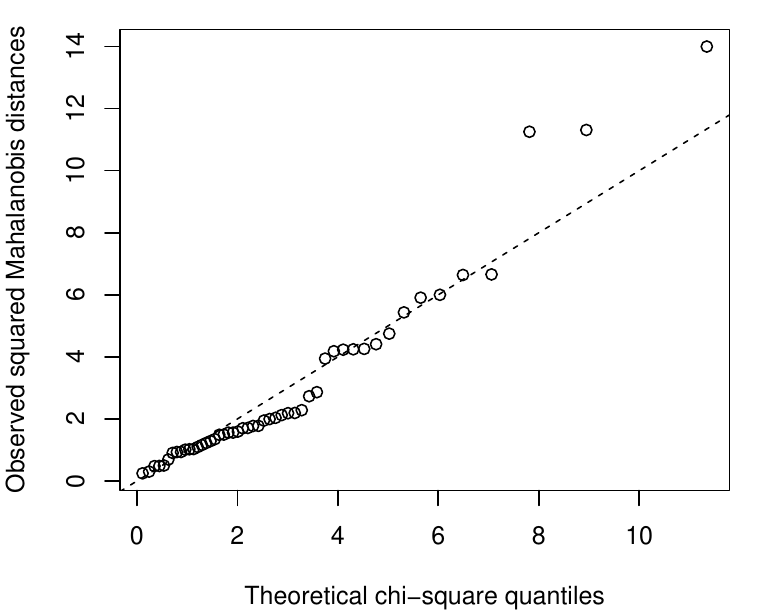}&
\includegraphics[width=0.45\linewidth]{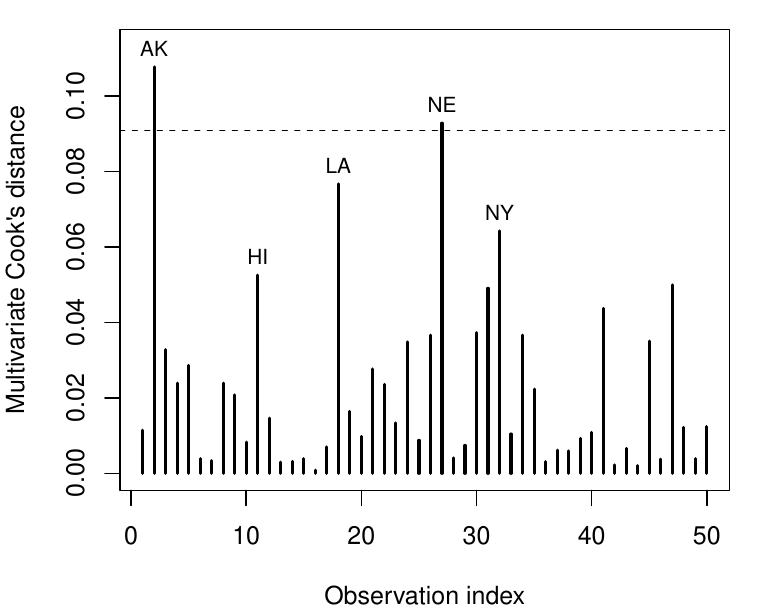}
  \end{tabular}
  \caption{The residual Mahalanobis distance (left) and the multivariate Cook's distance (right). The dashed vertical line for the Cook's distance is given by $4/(nq)$.}
  \label{fig:mahalanobis_cook}
\end{figure}

Thirdly, we examine potential residual dependencies. Figure~\ref{fig:US_map} shows a map of the US, where each state is coloured by its residual Mahalanobis distance. Furthermore, Figure~\ref{fig:boxplots} displays boxplots of the distances grouped by the four US Census regions. Again, we see no strong evidence for spatial or regional dependencies in the residuals.

\begin{figure}[]
  \centering
  \includegraphics[width=0.9\linewidth]{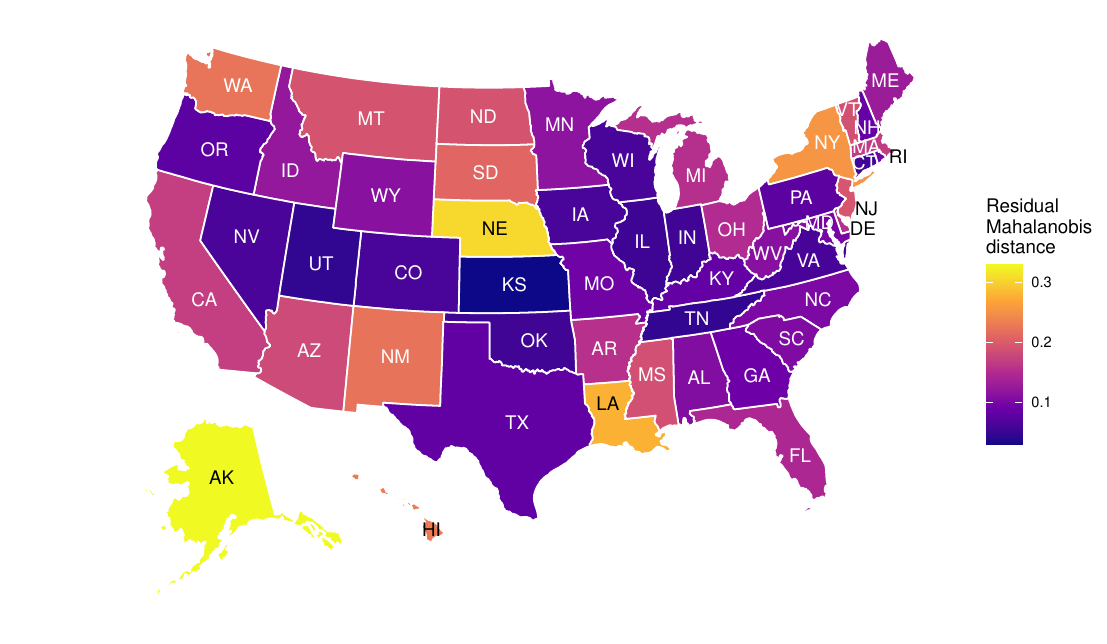}
  \caption{US map coloured by the residual Mahalanobis distances $D_n$ for each state $n=1,\ldots,50$.}
  \label{fig:US_map}
\end{figure}

\begin{figure}[]
  \centering
  \includegraphics[width=0.7\linewidth]{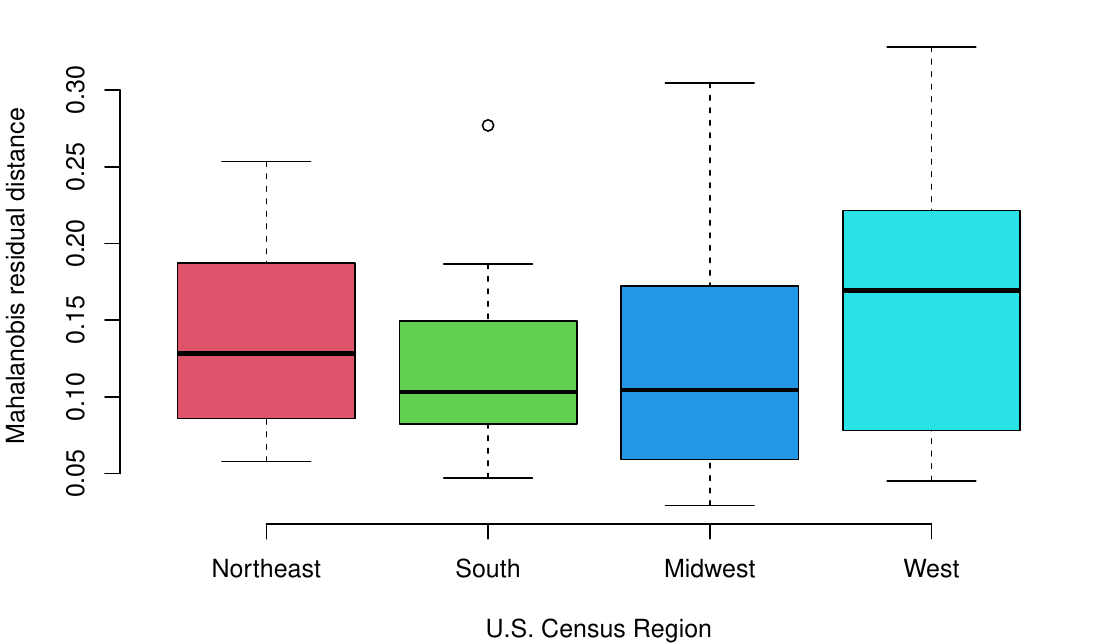}
  \caption{Residual Mahalanobis distances $D_n$ grouped by the four US Census regions.}
  \label{fig:boxplots}
\end{figure}

Finally, to assess the conditional linearity of the response-covariate relationships and to identify observations that may disproportionately determine individual regression coefficients, we examined added-variable plots based on the Frisch-Waugh-Lovell (FWL) theorem \citep{lovell1963}.
For each response and covariate, both the response and the selected covariate are first regressed on all but the selected covariate, and residuals from those two fits are plotted against one another.
By the FWL theorem, the slope of the least-squares line in this plot is exactly the estimated coefficient for the omitted variable in the full regression model.
The plots are thus useful not only to assess
whether linear associations are present, but also to examine whether the partial relationships are adequately represented in a linear manner. To this
end, Figure~\ref{fig:added_vars_FWL} shows such a partial association plot for each response (columns) and each variable (rows) in the model.
The added variable plots are broadly consistent with the
assumed linear relationships, and no single observation
appears to dominate the estimated coefficients.

\begin{figure}
  \centering
  \includegraphics[width=\linewidth]{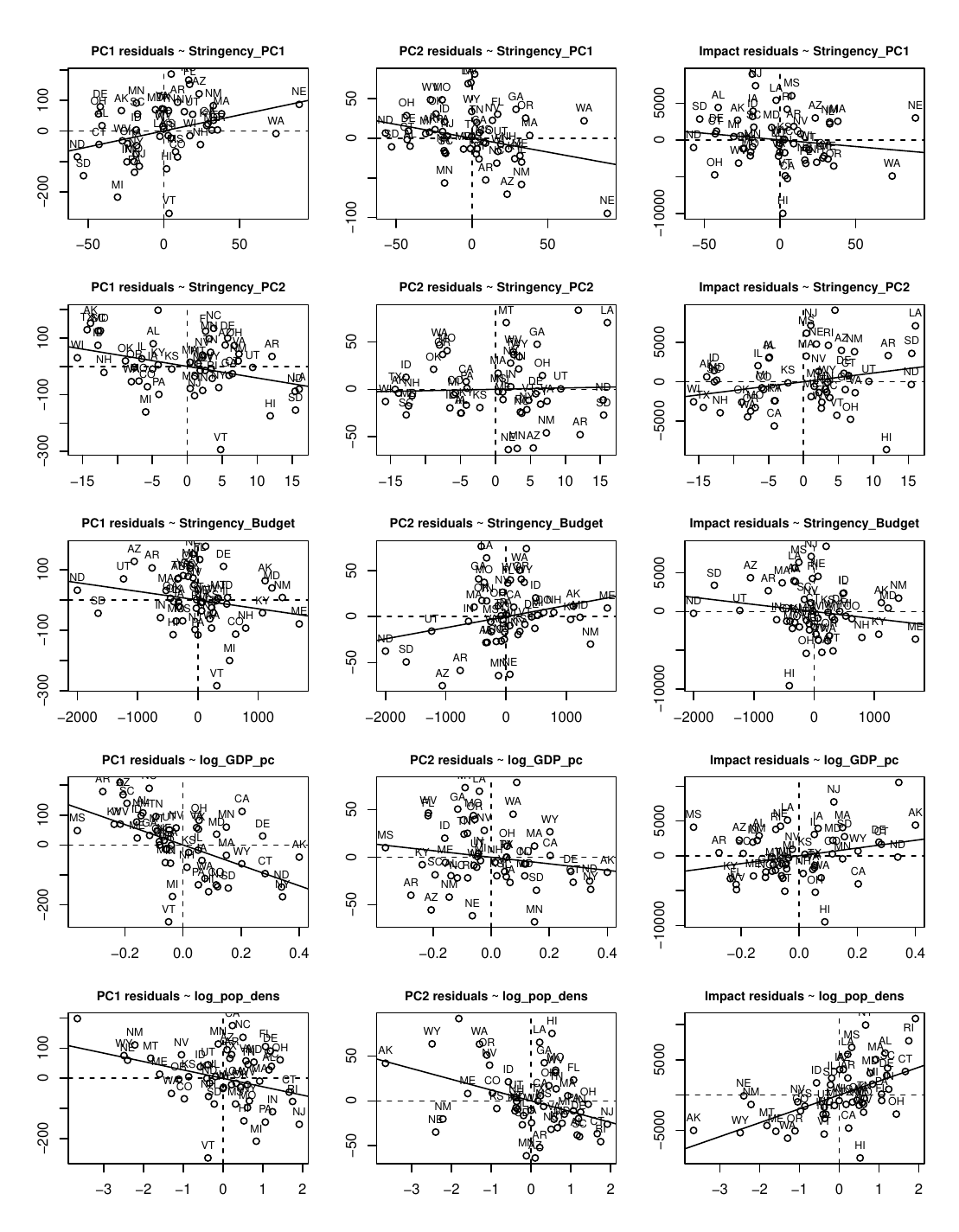}
  \caption{Added variable plots based on the FWL theorem. The solid lines denote the least-squares fits.}
  \label{fig:added_vars_FWL}
\end{figure}

\section{Proof of Proposition 1}\label{sup:proof}

\begin{proof}
Note that, writing $F^{-1}_\mu,\hat F^{-1}_\mu$ for the quantile functions corresponding to $\mu,\hat\mu$ respectively, we have that:
\begin{equation*}
  W_2(\hat\mu,\mu)=\|F^{-1}_\mu-F^{-1}_{\hat\mu}\|_{L^2((0,1),\lambda)}=\|\id-F^{-1}_{\hat\mu}\circ F_\mu\|_{L^2(\mathcal{X},\mu)}
\end{equation*}
for $\lambda$ the Lebesgue measure on $(0,1)$, the second equality being the change of variables $u=F_\mu(x)$. By the fixed-point characterisation of barycenters [9, 3.1.4]:
\begin{equation*}
  F^{-1}_{\hat\mu}(\cdot)=\frac1n\sum_{i=1}^nF^{-1}_{\nu_i}(\cdot),
\end{equation*}
which readily shows:
\begin{equation*}
  W_2(\hat\mu,\mu)=\Bigl\|\id-\frac1n\sum_{i=1}^nF^{-1}_{\nu_i}\circ F_\mu\Bigr\|_{L^2(\mathcal{X},\mu)}=\Bigl\|\id-\frac1n\sum_{i=1}^nT_i\Bigr\|_{L^2(\mathcal{X},\mu)},
\end{equation*}
where $T_i=F^{-1}_{\nu_i}\circ F_\mu$ because $T_i$ is increasing and $\mu$ is atomless, so that $T_i$ is the optimal map from $\mu$ to $\nu_i$. Since the $T_i-\id$ are i.i.d.\ and mean-zero in $L^2(\mathcal{X},\mu)$, this proves $W_2(\hat\mu,\mu)^2=O_p(n^{-1})$ by the central limit theorem in Hilbert spaces.

Let us define the operators
\begin{equation*}
  C:=\mathbb{E}\bigl[(T_1-\id)\otimes(T_1-\id)\bigr]\qquad\text{and}\qquad\hat C:=\frac1n\sum_{j=1}^n(\hat T_j-\id)\otimes(\hat T_j-\id),
\end{equation*}
where $T_j:=T^{\nu_j}_\mu$ and $\hat T_j:=T^{\nu_j}_{\hat\mu}$ for $j\ge1$. Next, we need to introduce the concept of parallel transport, where $P:L^2(\mathcal{X},\hat\mu)\to L^2(\mathcal{X},\mu)$, $Pv=v\circ F^{-1}_{\hat\mu}\circ F_\mu$, which is a surjective isometry mapping $L^2(\mathcal{X},\hat\mu)$ to $L^2(\mathcal{X},\mu)$ in a geometry preserving fashion, with $P^*=P^{-1}$. Then, it is easily seen that:
\begin{align*}
  \bigl\|C-P\hat CP^*\bigr\| &= \Bigl\|\mathbb{E}\bigl[(T_1-\id)\otimes(T_1-\id)\bigr]-\frac1n\sum_{j=1}^nP\bigl(\hat T_j-\id\bigr)\otimes P\bigl(\hat T_j-\id\bigr)\Bigr\|\\
  &\le O_p(n^{-1/2})+\Bigl\|\frac1n\sum_{j=1}^n(T_j-\id)\otimes(T_j-\id)-\frac1n\sum_{j=1}^nP\bigl(\hat T_j-\id\bigr)\otimes P\bigl(\hat T_j-\id\bigr)\Bigr\|\\
  &\le O_p(n^{-1/2})+\Bigl\|\frac1n\sum_{j=1}^n\bigl[(T_j-\id)-P(\hat T_j-\id)\bigr]\otimes(T_j-\id)\Bigr\|
  \\ & \qquad +\Bigl\|\frac1n\sum_{j=1}^nP\bigl(\hat T_j-\id\bigr)\otimes\bigl[(T_j-\id)-P(\hat T_j-\id)\bigr]\Bigr\|\\
  &= O_p(n^{-1/2})+\Bigl\|\frac1n\sum_{j=1}^n(\id-P\id)\otimes(T_j-\id)\Bigr\|+\Bigl\|\frac1n\sum_{j=1}^nP\bigl(\hat T_j-\id\bigr)\otimes(\id-P\id)\Bigr\|,
\end{align*}
where the first inequality is the law of large numbers for i.i.d.\ Hilbert--Schmidt operators, available since $\mathbb{E}\|T-\id\|^4\le|\mathcal{X}|^4<\infty$ by compactness, and where we have used that:
\begin{equation*}
  T_j-P\hat T_j=F^{-1}_{\nu_j}\circ F_\mu-\bigl(F^{-1}_{\nu_j}\circ F_{\hat\mu}\bigr)\circ F^{-1}_{\hat\mu}\circ F_\mu=F^{-1}_{\nu_j}\circ F_\mu-F^{-1}_{\nu_j}\circ F_\mu=0.
\end{equation*}
Since $\|\id-P\id\|_{L^2(\mathcal{X},\mu)}=\|\id-F^{-1}_{\hat\mu}\circ F_\mu\|_{L^2(\mathcal{X},\mu)}=O_p(n^{-1/2})$ by the first part, while $\|T_j-\id\|\le|\mathcal{X}|$ and $\|P(\hat T_j-\id)\|\le\|T_j-\id\|+\|\id-P\id\|$ are $O_p(1)$, this shows that:
\begin{equation}
  \label{eq:opbound}
  \bigl\|C-P\hat CP^*\bigr\|=O_p(n^{-1/2}).
\end{equation}

Note that, by the spectral decomposition theorem [3, Theorem 7.2.6], we may write $\hat C=\sum_{j\ge1}\hat\sigma_j^2\,\hat\phi_j\otimes\hat\phi_j$, for non-negative eigenvalues $\hat\sigma_j^2\ge0$ and orthonormal eigenfunctions $\hat\phi_j\in L^2(\mathcal{X},\hat\mu)$, $j\ge1$. Hence, we have that
\begin{equation*}
  P\hat CP^*=\sum_{j\ge1}\hat\sigma_j^2\,(P\hat\phi_j)\otimes(P\hat\phi_j)
\end{equation*}
and it is easy to see that $\{P\hat\phi_j\}_{j\ge1}$ form an orthonormal sequence in $L^2(\mathcal{X},\mu)$ of eigenfunctions for the operator $P\hat CP^*$, with corresponding eigenvalue sequence given by $\{\hat\sigma_j^2\}_{j\ge1}$. Hence, the usual perturbation argument gives
\begin{equation*}
  \sup_{1\le j\le R}\bigl|\hat\sigma_j^2-\sigma_j^2\bigr|^2=O_p(n^{-1}).
\end{equation*}

As for the result on principal components, we adapt the argument in Bosq [1, Lemma 4.3]. Write $e_j:=P\hat\phi_j$, and complete $\{\phi_j\}_{j\le R}$ to an orthonormal basis $\{\phi_\ell\}_{\ell\ge1}$ of $L^2(\mathcal{X},\mu)$ by adjoining an orthonormal basis of $\ker C$. First note that:
\begin{equation*}
  Ce_j-\sigma_j^2e_j=\bigl(C-P\hat CP^*\bigr)e_j+\bigl(\hat\sigma_j^2-\sigma_j^2\bigr)e_j
\end{equation*}
yielding $\|Ce_j-\sigma_j^2e_j\|=O_p(n^{-1/2})$ by \eqref{eq:opbound} and by the eigenvalue bound just obtained. On the other hand:
\begin{equation*}
  \|e_j-\phi_j\|^2=\sum_{\ell\ge1}\langle e_j-\phi_j,\phi_\ell\rangle^2=\sum_{\ell\ne j}\langle e_j,\phi_\ell\rangle^2+\bigl(1-\langle e_j,\phi_j\rangle\bigr)^2.
\end{equation*}
Moreover, since $\sum_{\ell\ge1}\langle e_j,\phi_\ell\rangle^2=1$ and $\langle e_j,\phi_j\rangle\in[0,1]$ by the choice of signs, we observe that:
\begin{equation*}
  \bigl(1-\langle e_j,\phi_j\rangle\bigr)^2\le\bigl(1-\langle e_j,\phi_j\rangle\bigr)\bigl(1+\langle e_j,\phi_j\rangle\bigr)=1-\langle e_j,\phi_j\rangle^2=\sum_{\ell\ne j}\langle e_j,\phi_\ell\rangle^2.
\end{equation*}
Therefore, we have that $\|e_j-\phi_j\|^2\le2\sum_{\ell\ne j}\langle e_j,\phi_\ell\rangle^2$, and hence:
\begin{equation*}
  O_p(n^{-1})=\bigl\|Ce_j-\sigma_j^2e_j\bigr\|^2=\sum_{\ell\ge1}\bigl(\sigma_\ell^2-\sigma_j^2\bigr)^2\langle e_j,\phi_\ell\rangle^2\ \ge\ \Delta^2\sum_{\ell\ne j}\langle e_j,\phi_\ell\rangle^2\ \ge\ \frac{\Delta^2}{2}\,\|e_j-\phi_j\|^2,
\end{equation*}
where $\Delta=\min\{\sigma_j^2-\sigma_{j+1}^2,\ j=1,\dots,R\}>0$ with the convention $\sigma_{R+1}^2:=0$, so that $|\sigma_\ell^2-\sigma_j^2|\ge\Delta$ for every $\ell\ne j$ and $j\le R$; in particular the terms with $\ell>R$, for which $\sigma_\ell^2=0$ and $|\sigma_\ell^2-\sigma_j^2|=\sigma_j^2\ge\sigma_R^2\ge\Delta$, must be retained rather than truncated.

Finally, we have:
\begin{equation*}
  \sum_{j=1}^R\langle\hat T_i-\id,\hat\phi_j\rangle_{L^2(\mathcal{X},\hat\mu)}\,P\hat\phi_j=\sum_{j=1}^R\langle T_i-\id,P\hat\phi_j\rangle_{L^2(\mathcal{X},\mu)}\,P\hat\phi_j+o_p(1)=\sum_{j=1}^R\langle T_i-\id,\phi_j\rangle_{L^2(\mathcal{X},\mu)}\,\phi_j+o_p(1),
\end{equation*}
the first equality by the isometry of $P$ together with $P(\hat T_i-\id)=(T_i-\id)+(\id-P\id)$ and $\|\id-P\id\|=O_p(n^{-1/2})$, the second by the convergence of the $P\hat\phi_j$ just established and $\|T_i-\id\|\le|\mathcal{X}|$. Since the variability of $T-\id$ is of rank $R$, we have $\mathbb{E}\|(T-\id)-\sum_{j\le R}\langle T-\id,\phi_j\rangle\phi_j\|^2=\sum_{j>R}\sigma_j^2=0$, so that the last sum equals $T_i-\id$ almost surely and $\exp_\mu(T_i-\id)=(T_i)_\#\mu=\nu_i$. The proof is completed by continuity of the exponential map, in the form $W_2\bigl(\exp_\mu(v),\exp_\mu(w)\bigr)\le\|v-w\|_{L^2(\mathcal{X},\mu)}$, which holds because $(\id+v,\id+w)_\#\mu$ is a coupling of $\exp_\mu(v)$ and $\exp_\mu(w)$.
\end{proof}

\end{document}